\newtheorem{thm}{Theorem}[section]
\newtheorem{prop}[thm]{Proposition}
\newtheorem{lemma}[thm]{Lemma}
\newtheorem{remark}[thm]{Remark}
\numberwithin{equation}{section}
\newcommand{\sn}{\hspace{0.1em}{\rm sn}\hspace{0.1em}}
\newcommand{\cn}{\hspace{0.1em}{\rm cn}\hspace{0.1em}}
\newcommand{\dn}{\hspace{0.1em}{\rm dn}\hspace{0.1em}}
\newcommand{\sz}{{\rm sz}}
\newcommand{\cz}{{\rm cz}}
\newcommand{\dz}{{\rm dz}}
\newcommand{\hsz}{{\rm \widehat{sz}}}
\newcommand{\hcz}{{\rm \widehat{cz}}}
\newcommand{\hdz}{{\rm \widehat{dz}}}
\newcommand{\sge}{{\rm sg}_e}
\newcommand{\cge}{{\rm cg}_e}
\newcommand{\dge}{{\rm dg}_e}
\newcommand{\sgo}{{\rm sg}_o}
\newcommand{\cgo}{{\rm cg}_o}
\newcommand{\dgo}{{\rm dg}_o}
\newcommand{\ii}{{\rm i}}
\newcommand{\eqOmega}{\sim_{\overline{\Omega}}}
\newcommand{\eqH}{\sim_H}
\newcommand{\WH}{W(E_8)/\hspace{-0.3em}\sim_H}
\newcommand{\utilde}[1]{\underaccent{\tilde}{#1}}
\begin{document}
\title[Geometry of an elliptic-difference equation]{Geometry of an elliptic-difference equation related to Q4}
\author{James Atkinson}
\address{Department of Mathematics and Information Sciences, Northumbria University, Newcastle upon Tyne, UK}
\email{mailto:james.atkinson@northumbria.ac.uk}
\author{Phil Howes}
\address{School of Mathematics and Statistics F07, The University of Sydney, NSW 2006, Australia}
\email{howespt@gmail.com}
\author{Nalini Joshi}
\address{School of Mathematics and Statistics F07, The University of Sydney, NSW 2006, Australia}
\email{nalini.joshi@sydney.edu.au}
\author{Nobutaka Nakazono}
\address{School of Mathematics and Statistics F07, The University of Sydney, NSW 2006, Australia}
\email{nobua.n1222@gmail.com}
\begin{abstract}
In this paper, we investigate a nonlinear non-autonomous elliptic difference equation, which was constructed by Ramani, Carstea and Grammaticos by integrable deautonomization of a periodic reduction of the discrete Krichever-Novikov equation, or Q4. 
We show how to construct it as a birational mapping on a rational surface blown up at eight points in $\mathbb P^1\times \mathbb P^1$, and find its affine Weyl symmetry, placing it in the geometric framework of the Painlev\'e equations.
The initial value space is ell-$A_0^{(1)}$ and its symmetry group is $W(F_4^{(1)})$.
We show that the deautonomization is consistent with the lattice-geometry of Q4 by giving an alternative construction, which is a reduction from Q4 in the usual sense.
A more symmetric reduction of the same kind provides another example of a second-order integrable elliptic difference equation.
\end{abstract}
\keywords{discrete Painlev\'e equations, quadrilateral lattice equations, Q4, periodic reduction, ABS equations, initial value space, singular solutions}
\subjclass[2010]{39A14, 39A23, 39A70, 33E17, 33E05}

\maketitle
\section{Introduction}
In this paper, we investigate the difference equation
\begin{equation}\label{eqn:rcg1}
\begin{split}
 &{\rm cn}(\gamma_n){\rm dn}(\gamma_n)\big(1-k^2{\rm sn}^4(z_n)\big)u_n\big(u_{n+1}+u_{n-1}\big)\\
 &-{\rm cn}(z_n){\rm dn}(z_n)\big(1-k^2{\rm sn}^2(z_n){\rm sn}^2(\gamma_n)\big)\big(u_{n+1}u_{n-1}+{u_n}^2\big)\\
 &+\big({\rm cn}^2(z_n)-{\rm cn}^2(\gamma_n)\big){\rm cn}(z_n){\rm dn}(z_n)\big(1+k^2{u_n}^2u_{n+1}u_{n-1}\big)=0,
\end{split}
\end{equation}
where $n\in{\mathbb Z}$ is the independent variable, $u_n$ is the dependent variable,
$k$ is the modulus of the elliptic sine, and $\gamma_e$,
$\gamma_o$ are constant complex parameters with
\begin{equation}\label{eqn:constraints}
 z_n=(\gamma_e+\gamma_o)n+z_0,\quad
 \gamma_n=\begin{cases}\gamma_e, \quad\textrm{for}\ n=2\,j\\ \gamma_o,\quad\textrm{for}\ n=2j+1.\end{cases}
\end{equation}
Equation \eqref{eqn:rcg1} was derived by Ramani, Carstea and Grammaticos \cite{RCG2009:MR2525848}. 
They applied singularity confinement to construct this non-autonomous difference equation 
from an autonomous reduction of a partial difference equation first provided by Adler \cite{AdlerVE1998:MR1601866}, 
which is also known as Q4 in the list of such equations that were classified by Adler, Bobenko and Suris (ABS) 
\cite{ABS2003:MR1962121}. (See Section \ref{subsection:staircase_reduction} for details.) 

In this paper, we study the initial value space of Equation \eqref{eqn:rcg1}. 
This is the space of initial conditions in the sense proposed by Okamoto \cite{OkamotoK1979:MR614694} 
for differential Painlev\'e equations and Sakai \cite{SakaiH2001:MR1882403} for discrete Painlev\'e systems. 
Sakai showed that the initial value space of each discrete Painlev\'e equation is a rational surface and 
showed how to construct their symmetry groups as affine Weyl groups 
orthogonal to the divisor class of their initial value space in the Picard lattice. 
However, Equation \eqref{eqn:rcg1} does not explicitly appear in Sakai's list of discrete Painlev\'e equations.
The first aim of the present paper is to identify where it fits in Sakai's classification of Painlev\'e systems.

We show that the initial value space of Equation \eqref{eqn:rcg1} can be identified as the ell-$A_0^{(1)}$ surface, 
i.e., the elliptic surface in Sakai's classification\cite{SakaiH2001:MR1882403}. 
However, Equation \eqref{eqn:rcg1} differs from Sakai's elliptic discrete Painlev\'e equation 
associated with rational surface $A_0^{(1)}$ in an important sense. 
Sakai's equation has affine Weyl symmetry group $W(E_8^{(1)})$, however, 
we find that the affine Weyl symmetry group for Equation \eqref{eqn:rcg1} is  $W(F_4^{(1)})$. 

It is interesting to note that the initial value space of Equation \eqref{eqn:rcg1} is different from the one 
which arises for the autonomous reduction of Q4, i.e., Equation \eqref{eqn:aut_Q4eqn}. 
In the latter case, the initial value space is type $q$-$A_1^{(1)}$, 
which contains eight base points arranged in pairs (with four lying on each of two lines). 
In contrast, for Equation \eqref{eqn:rcg1} the base points lie on an elliptic curve and 
the initial value space is type ell-$A_0^{(1)}$.
This observation is surprising, as it shows that the process of singularity confinement 
changes the initial value space in a non-trivial way.

Also in this paper we investigate the relationship between Equation \eqref{eqn:rcg1} and Q4.
The deautonomisation of the original construction turns out to be consistent with the standard lattice geometry of Q4, provided we take into account the tetrahedral symmetry of the equation.
This allows to reverse-engineer an alternative reduction procedure, establishing how the equation \eqref{eqn:rcg1} defines a particular class of solutions for Q4.
One consequence is that it allows to identify the trivial solutions \cite{MSY2003:MR1958273} of \eqref{eqn:rcg1} with the singular solutions of Q4 that are compatible with the reduction. 
The new reduction procedure admits some natural generalisations, and we give one in particular, because it leads to a second-order integrable equation (Equation \eqref{eqn:todatype2}) in the same class as Equation \eqref{eqn:rcg1}.

The paper is organised as follows. 
In Section \ref{section:geometry}, we construct the initial value space of Equation \eqref{eqn:rcg1}.
Its symmetry group is constructed in Section \ref{section:symmetry}. 
The alternative construction as a direct reduction of Q4, and the trivial solutions, are established in Section \ref{section:solution}. 

\subsection{Periodic reduction of Q4}\label{subsection:staircase_reduction}
In this section, we recall how to obtain Equation \eqref{eqn:rcg1} from Q4 in the Jacobi form \cite{HietarintaJ2005:MR2217106}:
\begin{equation}\label{eqn:Q4eqn}
{\rm sn}\,\alpha\,(u\tilde{u}+\hat{u}\hat{\tilde{u}})-{\rm sn}\,\beta\,(u\hat{u}+\tilde{u}\hat{\tilde{u}})-{\rm sn}(\alpha-\beta)\big(\tilde{u}\hat{u}+u\hat{\tilde{u}}-{\rm sn}\,\alpha\,{\rm sn}\,\beta\,(1+k^2u\tilde{u}\hat{u}\hat{\tilde{u}})\big)=0,
\end{equation}
where $u=u(m,n)$, $\hat{u}=u(m,n+1)$, $\tilde{u}=u(m+1,n)$, $n$ and $m$ are integers, 
$\alpha$ and $\beta$ are constant complex parameters of the equation, 
and $k$ is the modulus of the elliptic sine.
The simple $(1,1)$ periodic reduction $\tilde{u}=\hat{u}$ of Equation \eqref{eqn:Q4eqn} was carried out first by Joshi {\it et al.} \cite{JGTR2006:MR2271126}, who found the following second order equation:	
\begin{equation}\label{eqn:aut_Q4eqn}
\begin{split}
 &({\rm sn}\,\alpha\,-{\rm sn}\,\beta\,)u_n(u_{n+1}+u_{n-1})-{\rm sn}(\alpha-\beta)(u_{n+1}u_{n-1}+{u_n}^2)\\
 &+{\rm sn}\,\alpha\,{\rm sn}\,\beta\,{\rm sn}(\alpha-\beta)(1+k^2{u_n}^2u_{n+1}u_{n-1})=0.
\end{split}
\end{equation}
Ramani {\it et al.} \cite{RCG2009:MR2525848} deautonomised Equation \eqref{eqn:aut_Q4eqn} by the method of singularity confinement after a change of variables $\alpha\rightarrow \gamma+z,\, \beta\rightarrow \gamma -z$.
The resulting equation then becomes Equation \eqref{eqn:rcg1}.

Ramani {\it et al.} \cite{RCG2009:MR2525848} also numerically investigated the growth of degree of the iterates of Equation \eqref{eqn:rcg1} as functions of initial values. 
They found that the growth is quadratic and that the algebraic entropy \cite{BV1999:MR1704282} is zero.  
These results led these authors to assert that Equation \eqref{eqn:rcg1} should therefore be considered to be a discrete Painlev\'e equation. 
We establish how this equation fits into algebro-geometric classification of Sakai by describing initial value space and its symmetry group.

\section{Geometry of System \eqref{eqn:rcgsys}}\label{section:geometry} 
Sakai \cite{SakaiH2001:MR1882403} showed that discrete Painlev\'e systems are bi-holomorphic mappings 
on rational surfaces obtained by an 8-point blow up of $\mathbb{P}^1\times\mathbb{P}^1$. 
In this section, we find these base points of System \eqref{eqn:rcgsys}, 
describe the corresponding initial value space and construct its Cremona isometries.

We start by recasting Equation \eqref{eqn:rcg1} as a system
\begin{subequations}\label{eqn:rcgsys}
\begin{align}
 &\bar{g}
 =\frac{(1-k^2\sz^4)\cge\dge fg-(\cge^2-\cz^2){\rm cz}\,\dz-(1-k^2\sge^2 \sz^2)\cz\,\dz f^2}
 {k^2(\cge^2-\cz^2)\cz\,\dz\, f^2 g-(1-k^2 \sz^4)\cge\dge f+(1-k^2\sge^2\sz^2)\cz\,\dz\,g},\\
 &\bar{f}
 =\frac{(1-k^2\hsz^4)\cgo\dgo \bar{g}f-(\cgo^2-\hcz^2)\hcz\,\hdz-(1-k^2\sgo^2 \hsz^2)\hcz\,\hdz\,\bar{g}^2}
 {k^2(\cgo^2-\hcz^2)\hcz\,\hdz\, \bar{g}^2f-(1-k^2 \hsz^4)\cgo\dgo \bar{g}+(1-k^2\sgo^2 \hsz^2)\hcz\,\hdz\, f},
\end{align}
\end{subequations}
where 
\begin{align}
& \sz=\sn(z_{2n}),\quad
 \cz=\cn(z_{2n}),\quad
 \dz=\dn(z_{2n}),\\
 &\hsz=\sn(z_{2n+1}),\quad
 \hcz=\cn(z_{2n+1}),\quad
 \hdz=\dn(z_{2n+1}),\\
 &\sge=\sn(\gamma_e),\quad
 \cge=\cn(\gamma_e),\quad
 \dge=\dn(\gamma_e),\\
 &\sgo=\sn(\gamma_o),\quad
 \cgo=\cn(\gamma_o),\quad
 \dgo=\dn(\gamma_o),\\
 &f=f_n, \quad 
 g=g_n,\quad
 z_k=(\gamma_e+\gamma_o)k+z_0,
\end{align} 
and $\,\bar{}\,$ means $n\rightarrow n+1$. 
This provides a mapping of $\mathbb{P}^1\times\mathbb{P}^1$ to itself, which we denote by $\varphi_a:(f,g)\rightarrow (\bar{f},\bar{g})$.
Here, the variables $f, g$ are related to $u_n$ by
\begin{equation}
 f_n=u_{2n},\quad g_n=u_{2n-1}.
\end{equation}
\subsection{Initial value space}
System \eqref{eqn:rcgsys} has base points (where the system is ill defined because it approaches $0/0$), which are given by direct calculation:
\begin{subequations}\label{eqn:bps}
\begin{align}
 &p_1: (f,g)=\left(\sn(z_{2n}+K), \sn(\gamma_e+K)\right),\\
 &p_2: (f,g)=\left(-\sn(z_{2n}+K), -\sn(\gamma_e+K)\right),\\
 &p_3: (f,g)=\left(\sn(\gamma_o+K), \sn(z_{2n-1}+K)\right),\\
 &p_4: (f,g)=\left(-\sn(\gamma_o+K), -\sn(z_{2n-1}+K)\right),\\
 &p_5: (f,g)=\left(\sn(z_{2n}+K+iK'), \sn(\gamma_e+K+iK')\right),\\
 &p_6: (f,g)=\left(-\sn(z_{2n}+K+iK'), -\sn(\gamma_e+K+iK')\right),\\
 &p_7: (f,g)=\left(\sn(\gamma_o+K+iK'), \sn(z_{2n-1}+K+iK')\right),\\
 &p_8: (f,g)=\left(-\sn(\gamma_o+K+iK'), -\sn(z_{2n-1}+K+iK')\right),
\end{align}
\end{subequations}
where $K=K(k)$ and $K'=K'(k)$ are complete elliptic integrals, related to periods of ${\rm sn}(z)$. 
For each base point $p_i=(a_j,b_j)$ we can show that the coordinate change
\begin{equation}
 f_j=\cfrac{f-a_j}{g-b_j},\quad
 g_j=g-b_j,
\end{equation}
or
\begin{equation}
 f'_j=f-a_j,\quad
 g'_j=\cfrac{g-b_j}{f-a_j},
\end{equation}
resolve the flow, i.e., the equations contain no other base point on the line $g_j=0$ (or $f'_j=0$).
Therefore the list \eqref{eqn:bps} of base points is completed.
The base points lie on the biquadratic curve given by
\begin{equation}\label{eqn:ellcurve}
A_1(1 + k^2 f^2 g^2) + A_2 fg + A_3 (f^2 + g^2)=0,
\end{equation}
where 
\begin{align}
 A_1=&\sn(\gamma_o+k)\sn(z_{2n-1}+K)(\sn(\gamma_e+K)^2+\sn(z_{2n}+K)^2)\notag\\
 &-\sn(\gamma_e+K)\sn(z_{2n}+K)(\sn(\gamma_o+K)^2+\sn(z_{2n-1}+K)^2),\\
 A_2=&(1+k^2\sn(\gamma_e+K)^2\sn(z_{2n}+K)^2)(\sn(\gamma_o+K)^2+\sn(z_{2n-1}+K)^2)\notag\\
 &-(1+k^2\sn(\gamma_o+K)^2\sn(z_{2n-1}+K)^2)(\sn(\gamma_e+K)^2+\sn(z_{2n}+K)^2),\\
 A_3=&\sn(\gamma_e+K)\sn(z_{2n}+K)(1+k^2\sn(\gamma_o+K)^2\sn(z_{2n-1}+K)^2)\notag\\
 &-\sn(\gamma_o+K)\sn(z_{2n-1}+K)(1+k^2\sn(\gamma_e+K)^2\sn(z_{2n}+K)^2).
\end{align}

In this configuration, no three base points are collinear. 
(This is equivalent to the curve in Equation \eqref{eqn:ellcurve} being an elliptic curve
which we can easily calculate by checking its discriminant.
This is shown by a different method in Section \ref{section:symmetry} by recasting it in terms of a Weierstrass cubic curve.)
The initial value space can be identified with the elliptic surface of type $A_0^{(1)}$ (ell-$A_0^{(1)}$ surface) in Sakai's classification \cite{SakaiH2001:MR1882403}. 
The configuration of base points is invariant under the transformations
\begin{equation}
 (f,g)\to(-f,-g),
\end{equation}
and
\begin{equation}
 (f,g)\to(k^{-1}f^{-1},k^{-1}g^{-1}),
\end{equation}
which are symmetries of the equation.
As the base points act as parameters in construction of Painlev\'e systems through affine Weyl group action, 
we observe that this configuration of pairwise opposite `parameters' would not give the full parameter range expected from the general case.
\subsection{The autonomous case}\label{subsection:autonomous}
In the autonomous case, i.e., System \eqref{eqn:rcgsys} with $z_n=z_0$ (or $\gamma_o=-\gamma_e$), 
or equivalently Equation \eqref{eqn:aut_Q4eqn}, the base points lie on two straight lines: 
\begin{equation}
 \frac{f}{g} = \frac{\sn(z_0+K)}{\sn(\gamma_e+K)}, \quad
 \frac{f}{g} = \frac{\sn(\gamma_e+K)}{\sn(z_0+K)}.
\end{equation}
In this case the rational surface is degenerate and becomes the multiplicative surface of type $A_1^{(1)}$ ($q$-$A_1^{(1)}$ surface). 
(See Appendix \ref{section:qA0} for details.) 
It is interesting to note that the process of deautonomisation through singularity confinement fundamentally changes the initial value space of the equation. 

\subsection{Cremona isometries}\label{subsection:W(E8(1))}
Let $\epsilon: X \to \mathbb{P}^1\times\mathbb{P}^1$ denote blow up of $\mathbb{P}^1\times\mathbb{P}^1$ at the points \eqref{eqn:bps}. 
We show that the group of Cremona isometries associated with this surface $X$ is $W(E_8^{(1)})$ \cite{SakaiH2001:MR1882403}.

The linear equivalence classes of the total transform of the coordinate lines $f$=constant and $g$=constant are denoted by $H_f$ and $H_g$, respectively. 
From \cite{book_HartshorneR1977:MR0463157} we know that the Picard lattice of $X$, denoted by Pic$(X)$, is given by
\begin{equation}\label{eqn:pic_lat}
 {\rm Pic}(X)=\mathbb{Z}H_f\bigoplus\mathbb{Z}H_g\bigoplus_{i=1}^8\mathbb{Z}e_i,
\end{equation}
where $e_i=\epsilon^{-1}(p_i)$, $i=1,\dots,8$, is the total transform of the point of the $i$-th blow up. 
The symmetric bilinear form $(|)$ is defined by the intersection (intersection form) as follows:
\begin{equation}
 (H_f|H_g)=1,\quad
 (H_f|H_f)=(H_g|H_g)=(H_f|e_i)=(H_g|e_i)=0,\quad
 (e_i|e_j)=-\delta_{ij},
\end{equation}
where $1\le i\le 8$, $1\le j\le 8$ are integers. 
The anti-canonical divisor of $X$ is given by
\begin{displaymath}
 \delta=-K_X=2H_f+2H_g-\sum_{i=1}^8e_i.
\end{displaymath}
Because the eight points $\{p_i\}_{i=1}^8$ lie on the biquadratic curve \eqref{eqn:ellcurve}, we identify the surface $X$ as being type ell-$A_0^{(1)}$ in Sakai's list. 

In order to identify the action of the mapping $\varphi_a$ (the time evolution of System \eqref{eqn:rcgsys}) on Pic$(X)$, 
we need to compute its individual actions on $H_f$, $H_g$ and $e_i$, $1\le i\le 8$. 
We provide an example of how this can be done in the following lemma.
\begin{lemma}
$\varphi_a(e_5)=H_f-e_5$.
\end{lemma}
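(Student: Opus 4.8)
The plan is to compute the image of the exceptional curve $e_5$ explicitly, by pushing $\varphi_a$ forward along $e_5$ in blow-up coordinates and then reading off the class of the resulting curve in ${\rm Pic}(X)$. The first observation is that $p_5$ is an indeterminacy point of the $\bar g$-component of System~\eqref{eqn:rcgsys}: its coordinates are built from $z_{2n}$ and $\gamma_e$, which are precisely the data entering that component, and one checks that there both its numerator and denominator vanish. I would then use the blow-up coordinates at $p_5=(a_5,b_5)$ introduced above, writing $f=a_5+f_5g_5$ and $g=b_5+g_5$, so that the exceptional line is $e_5=\{g_5=0\}$, parametrized by the direction coordinate $f_5$.

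Substituting these into the $\bar g$-component and letting $g_5\to0$, the common factor responsible for the $0/0$ cancels and leaves a finite value $\bar g(f_5)$ depending non-trivially on $f_5$, so $\bar g$ sweeps out the whole $\bar g$-line as $f_5$ ranges over $e_5$. Feeding this into the $\bar f$-component, in which $f\equiv a_5$ is now frozen, $\bar f$ becomes a ratio of two quadratics in $\bar g$ whose coefficients involve only $a_5$, $\gamma_o$ and $z_{2n+1}$. The crux is to show that these two quadratics are proportional, so that $\bar f$ is independent of $\bar g$, hence of $f_5$, and equals a single constant $c$. It then follows that $\varphi_a$ maps $e_5$ into the vertical line $\{\bar f=c\}$, whose total transform has class $H_f$.

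To obtain the correction term I would identify $c$ and locate it among the base points of the target surface. Recall that $\varphi_a$ lifts to an isomorphism $X_n\to X_{n+1}$, and that we compare the two Picard lattices through their common basis $H_f,H_g,e_i$; it therefore suffices to show that the line $\{\bar f=c\}$ meets exactly one base point of $X_{n+1}$, namely $p_5$ after the shift $n\mapsto n+1$. This contributes the $-e_5$ term, and since the image is irreducible its strict transform has class $H_f-e_5$, as claimed.

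The main obstacle is the proportionality used above together with the evaluation of $c$. Both reduce to identities among Jacobi elliptic functions: the quarter- and half-period shifts used to rewrite the base points \eqref{eqn:bps}, namely $\sn(u+K)=\cn u/\dn u$ and $\sn(u+K+\ii K')=\dn u/(k\,\cn u)$, together with the addition formulas relating $z_{2n}$, $z_{2n+1}$ and the constants $\gamma_e,\gamma_o$. The delicate point is to verify that the dependence on $f_5$ cancels exactly rather than merely to leading order; once the coefficient ratios defining $c$ are shown to coincide, the identification of the base point and the bookkeeping in ${\rm Pic}(X)$ are routine.
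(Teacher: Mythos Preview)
Your proposal is correct and follows essentially the same approach as the paper: introduce the blow-up chart at $p_5$ (your $(f_5,g_5)$ is the paper's $(u_5,v_5)$), restrict $\varphi_a$ to the exceptional line $\{g_5=0\}$, and read off that the image is the vertical line $\bar f=\sn(z_{2n+2}+K+\ii K')$, whose class is $H_f-e_5$. The paper simply asserts the outcome of this substitution as $(\bar f,\bar g)|_{v_5=0}=(\sn(z_{2n+2}+K+\ii K'),\,m(u_5))+\mathcal{O}(v_5)$ with $m$ M\"obius, whereas you unpack the mechanism behind the constancy of $\bar f$ (proportionality of the two quadratics in $\bar g$ once $f=a_5$); the elliptic identities you flag as the main obstacle are precisely what underlies the paper's unstated computation.
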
 
\begin{proof}
The exceptional line $e_5$ is described by $v_5=0$ in the coordinate chart
\begin{equation}
 (u_5,v_5)=\left(\dfrac{f-\sn(z_{2n}+K+iK')}{g-\sn(\gamma_e+K+iK') },g-\sn(\gamma_e+K+iK')\right).
\end{equation}
This change of variables transforms System \eqref{eqn:rcgsys} to 
\begin{equation}
 (\bar{f},\bar{g})|_{(u_5,v_5)}= (\sn(z_{2n+2}+K+iK'), m(u_5))+{\mathcal O}(v_5),
\end{equation}
where $m$ is a M\"obius transformation. 
Hence $e_5$ is mapped to the curve $\bar{f}=\sn(z_{2n+2}+K+iK')$. 
This curve has the representation $H_f-e_5$ in Pic$(X)$. 
\end{proof}
The action on the other elements of Pic$(X)$ can be calculated in a similar way. 
This provides the following result:
\begin{equation}\label{eqn:picmatrix}
 \varphi_a 
\begin{pmatrix}H_f\\H_g\\e_1\\e_2\\e_3\\e_4\\e_5\\e_6\\e_7\\e_8\end{pmatrix}
 =\begin{pmatrix}
 5 & 2 & -1 & -1 & -1 & -1 & -2 & -2 & -2 & -2\\
 2 & 1 & 0 & 0 & 0 & 0 & -1 & -1 & -1 & -1\\
 2 & 1 & -1 & 0 & 0 & 0 & -1 & -1 & -1 & -1\\
 2 & 1 & 0 & -1 & 0 & 0 & -1 & -1 & -1 & -1\\
 2 & 1 & 0 & 0 & -1 & 0 & -1 & -1 & -1 & -1\\
 2 & 1 & 0 & 0 & 0 & -1 & -1 & -1 & -1 & -1\\
 1 & 0 & 0 & 0 & 0 & 0 & -1 & 0 & 0 & 0\\
 1 & 0 & 0 & 0 & 0 & 0 & 0 & -1 & 0 & 0\\
 1 & 0 & 0 & 0 & 0 & 0 & 0 & 0 & -1 & 0\\
 1 & 0 & 0 & 0 & 0 & 0 & 0 & 0 & 0 & -1
\end{pmatrix}
\begin{pmatrix}H_f\\H_g\\e_1\\e_2\\e_3\\e_4\\e_5\\e_6\\e_7\\e_8\end{pmatrix}.
\end{equation}
\begin{remark}
Algebraic entropy \cite{BV1999:MR1704282} allows one to characterize the integrability of a system by its growth in degree in the initial data. 
The eigenvalues of the matrix corresponding to time evolution in Equation \eqref{eqn:picmatrix} all lie on the unit disk. 
Takenawa showed that if all eigenvalues of the matrix corresponding to time evolution lie on the unit disk, 
then the algebraic entropy of the corresponding system vanishes \cite{TakenawaT2001:MR1877473}.
Thus the entropy of \eqref{eqn:picmatrix} is 0.
\end{remark}

Consider the orthogonal complement $\delta^\bot$. 
We can calculate its root lattice $Q(A_0^{(1)\bot})=\bigoplus_{i=0}^8\mathbb{Z}\alpha_i$
explicitly by searching for elements of Pic$(X)$ that are orthogonal to the anti-canonical divisor $\delta$. 
These lead to the following simple roots
\begin{subequations}\label{eqn:alpha_j}
\begin{align}
 &\alpha_0=e_1-e_2,
 &&\alpha_1=H_f-H_g,\\
 &\alpha_2=H_g-e_1-e_2,
 &&\alpha_3=e_2-e_3,\\
 &\alpha_4=e_3-e_4,
 &&\alpha_5=e_4-e_5,\\
 &\alpha_6=e_5-e_6,
 &&\alpha_7=e_6-e_7,\\
 &\alpha_8=e_7-e_8,
\end{align}
\end{subequations}
where
\begin{equation}
 \delta=3\alpha_0+2\alpha_1+4\alpha_2+6\alpha_3+5\alpha_4+4\alpha_5+3\alpha_6+2\alpha_7+\alpha_8.
\end{equation}
We can show that
\begin{equation}
 (\alpha_i|\alpha_j)
 =\begin{cases}
 -2,& i=j\\
 \ 1, &j=i+1,\, i\neq0,~ {\rm or\ if}\ \ i=3,\, j=0\\
 \ 0, &\text{otherwise}.
\end{cases}
\end{equation}
Representing intersecting $\alpha_i$ and $\alpha_j$ by a line between nodes $i$ and $j$, we obtain the Dynkin diagram of $E_8^{(1)}$ shown in Figure \ref{figure:e8}.
\begin{figure}[hbt]
\begin{center}
\includegraphics[width=0.8\textwidth]{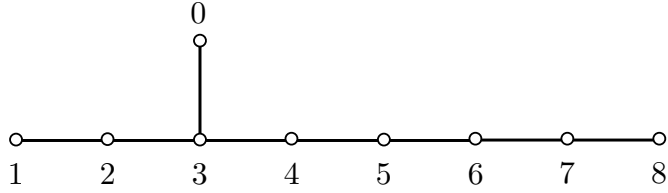}
\caption{Dynkin diagram for the root lattice $\bigoplus_{i=0}^8\mathbb{Z}\alpha_i$.}
\label{figure:e8}
\end{center}
\end{figure}

A Cremona isometry is defined by an automorphism of Pic$(X)$ which preserves 
\begin{description}
\item[(i)]
the intersection form on Pic$(X)$;
\item[(ii)]
the canonical divisor $K_X$;
\item[(iii)]
the semigroup of effective classes of divisors.
\end{description}
It is well known that 
automorphisms of the Dynkin diagram corresponding to the divisors
and reflections for simple roots which orthogonal to all divisors
are Cremona isometries and form (extended) affine Weyl group\cite{DO1988:MR1007155,LooijengaE1981:MR632841,SakaiH2001:MR1882403}.

We define the reflections $s_i$, $i=0, \dots, 8$, across the hyperplane orthogonal to the root $\alpha_i$ by
\begin{equation}
 s_i(v)=v+(v|\alpha_i)\alpha_i 
\end{equation}
for all $v\in {\rm Pic}(X)$. 
It is well known that the reflections $s_i$ form an affine Weyl group $W(E_8^{(1)})$, 
satisfying the following fundamental relations:
\begin{subequations}
\begin{align}
 {s_i}^2&=1,\quad i=0,\dots,8,\\
 (s_is_{i+1})^3&=1,\quad i=1,\dots,7,\\ 
 (s_3s_0)^3&=1,\\
 (s_is_j)^2&=1, \quad {\rm otherwise.}
\end{align}
\end{subequations}
We note here that mapping $1={\rm Id}$ is the identity mapping on ${\rm Pic}(X)$.
(We use this definition throughout this paper.)
We decompose $\varphi_a$ into the generators of $W(E_8^{(1)})$ by $\varphi_a={\varphi_s}^2$ where
\begin{equation}
 \varphi_s=s_5s_4s_3s_0s_6s_5s_4s_3s_0s_7s_6s_5s_4s_3s_0s_8s_7s_6s_5s_4s_3s_1s_2s_0s_3s_4s_0s_3s_2,
\end{equation}
whose left action on the root lattice is given by
\begin{align}
 \varphi_a:&(\alpha_0,\alpha_1,\alpha_2,\alpha_3,\alpha_4,\alpha_5,\alpha_6,\alpha_7,\alpha_8)\notag\\
 &\mapsto
 (-\alpha_0,\alpha_1+\delta,\alpha_0+\alpha_2+2\alpha_3+\alpha_4-\delta,-\alpha_3,-\alpha_4,\notag\\
 &\hspace{2em}-\alpha_0-\alpha_1-2\alpha_2-2\alpha_3-\alpha_4-\alpha_5+\delta,-\alpha_6,-\alpha_7,-\alpha_8),
\end{align}
according to the idea of Lemma 3.11 in \cite{KacVG1990:MR1104219}.
To obtain these results, we used the definitions of $\alpha_i$ \eqref{eqn:alpha_j} and the action of $\varphi_a$ on the Picard lattice \eqref{eqn:picmatrix}. 

Note that the transformation $\varphi_a$ is not a translation on the root lattice, but ${\varphi_a}^2$ is a translation
whose action is given by
\begin{align}
 {\varphi_a}^2:&(\alpha_0,\alpha_1,\alpha_2,\alpha_3,\alpha_4,\alpha_5,\alpha_6,\alpha_7,\alpha_8)\notag\\
 &\mapsto
 (\alpha_0,\alpha_1+2\delta,\alpha_2-2\delta,\alpha_3,\alpha_4,\alpha_5+\delta,\alpha_6,\alpha_7,\alpha_8).
\end{align}
Thus, the system \eqref{eqn:rcgsys} is not obtained by translation in the affine Weyl group $W(E_8^{(1)})$. 
However, the elliptic equation in Sakai's list is obtained by translation in this group. 
(See Section \ref{subsection:SymmetryCremona}.)

\section{Symmetry group for System \eqref{eqn:rcgsys}}\label{section:symmetry}
In this section, we consider the symmetry group for System \eqref{eqn:rcgsys}. 
For this purpose, it is useful to express the base points in terms of Weierstrass' $\wp$ function.
Murata {\it et al.} \cite{MSY2003:MR1958273} obtained actions on these base points. 
We recall it here and provide its iteration under the time evolution of System \eqref{eqn:rcgsys}. 
These are used to find the symmetry group for this system. 
\subsection{Transformation of base points to Weierstrass' $\wp$ form}
The well known transformation of Jacobi to Weierstrass elliptic functions:
\begin{equation}
 \sn(a+b)\sn(a-b)
 =-\sn^2(b)\,\frac{\wp\left(\cfrac{a}{(e_1-e_3)^{1/2}}\right)-e_3-(e_1-e_3)\sn^{-2}(b)}
 {\wp\left(\cfrac{a}{(e_1-e_3)^{1/2}}\right)-e_3-k^2(e_1-e_3)\sn^2(b)},
\end{equation}
motivate our transformation of coordinates $(f,g)$ to $(F,G)$, which are given by
\begin{subequations}
\begin{align}
 &fg=-\sn^2(t)\,\cfrac{F-e_3-(e_1-e_3)\sn^{-2}(t)}{F-e_3-k^2(e_1-e_3)\sn^2(t)},
 \label{eqn:fgtoFG1}\\
 &\cfrac{f}{kg}=-\sn^2\left(t+\frac{\ii K'}{2}\right)\cfrac{G-e_3-(e_1-e_3)\sn^{-2}\left(t+\cfrac{\ii K'}{2}\right)}{G-e_3-k^2(e_1-e_3)\sn^2\left(t+\cfrac{\ii K'}{2}\right)}.
 \label{eqn:fgtoFG2}
\end{align}
\end{subequations}
Here, constants $e_i$, $i=1,2,3$, are the zeroes of the Weierstrass normal cubic $4z^3-g_2z-g_3$,
but these are different to the exceptional lines in the previous section.
This change of coordinates maps the simple base points \eqref{eqn:bps} to double base points
\begin{subequations}\label{eqn:base_p_prime_1}
\begin{align}
 &p'_1=p'_2:(F,G)=
 \left(\wp\left(\frac{v_1}{(e_1-e_3)^{1/2}}\right),
 \wp\left(\frac{v_1}{(e_1-e_3)^{1/2}}+\frac{\ii K'}{2(e_1-e_3)^{1/2}}\right)\right),\\
 &p'_3=p'_4:(F,G)=\left(\wp\left(\frac{v_2}{(e_1-e_3)^{1/2}}\right),
 \wp\left(\frac{v_2}{(e_1-e_3)^{1/2}}+\frac{\ii K'}{2(e_1-e_3)^{1/2}}\right)\right),\\
 &p'_5=p'_6:(F,G)=\left(\wp\left(\frac{v_1+\ii K'}{(e_1-e_3)^{1/2}}\right),
 \wp\left(\frac{v_1+\ii K'}{(e_1-e_3)^{1/2}}+\frac{\ii K'}{2(e_1-e_3)^{1/2}}\right)\right),\\
 &p'_7=p'_8:(F,G)=\left(\wp\left(\frac{v_2+\ii K'}{(e_1-e_3)^{1/2}}\right),
 \wp\left(\frac{v_2+\ii K'}{(e_1-e_3)^{1/2}}+\frac{\ii K'}{2(e_1-e_3)^{1/2}}\right)\right), 
\end{align}
\end{subequations}
where $\ii=\sqrt[]{-1}$.
Note that the correspondence between $\{z_{2n},\gamma_e,\gamma_o\}$ and $\{v_1,v_2,t\}$ are given by
\begin{equation}
 z_{2n}+K=v_1+t,\quad 
 \gamma_e+K=v_1-t,\quad
 \gamma_o+K=v_2+t.
\end{equation}
\subsection{Cremona isometries}\label{subsection:SymmetryCremona}
We consider the algebro-geometric approach to the new coordinate $(F,G)$ and blow up the base points $p'_i$, $i=1,\dots,8$, on the rational surface $X$.
Let $\epsilon: X \to \mathbb{P}^1\times\mathbb{P}^1$ be the blow up of $\mathbb{P}^1\times\mathbb{P}^1$ at the base points given above \eqref{eqn:base_p_prime_1},
$E_i =\epsilon^{-1}(p'_i)$ be the corresponding exceptional divisors, and
$H_0$ and $H_1$ be the linear equivalence classes of total transform of $F$=constant and $G$=constant, respectively.
Therefore, we have the Picard lattice of $X$ and anti-canonical divisor of $X$:
\begin{align}
 &{\rm Pic}(X)=\mathbb{Z}H_0\bigoplus\mathbb{Z}H_1\bigoplus_{i=1}^8\mathbb{Z}E_i,
 \label{sec3:picX}\\
 &\delta=-K_X=2H_0+2H_1-\sum_{i=1}^8E_i,
\end{align}
respectively,
and the intersection form $(|)$ is defined by
\begin{equation}
 (H_0|H_1)=1,\quad
 (H_0|H_0)=(H_1|H_1)=(H_0|E_i)=(H_1|E_i)=0,\quad
 (E_i|E_j)=-\delta_{ij}.
\end{equation}
Furthermore, the orthogonal complement $\delta^\bot=\{\beta_i|i=0,\dots,8\}$ is given by
\begin{subequations}
\begin{align}
 &\beta_1=H_1-H_0,
 &&\beta_2=H_0-E_1-E_2,\\
 &\beta_3=E_2-E_3,
 &&\beta_4=E_3-E_4,\\
 &\beta_5=E_4-E_5,
 &&\beta_6=E_5-E_6,\\
 &\beta_7=E_6-E_7,
 &&\beta_0=E_7-E_8,\\
 &\beta_8=E_1-E_2,
\end{align}
\end{subequations}
where the set of simple roots $\{\beta_0,\dots,\beta_8\}$ corresponds to the Dynkin diagram of $E_8^{(1)}$ (see Figure \ref{fig:dynkin_E8(1)}),
and the anti-canonical divisor is expressed in terms of these by 
\begin{equation}
 \delta=2\beta_1+4\beta_2+6\beta_3+5\beta_4+4\beta_5+3\beta_6+2\beta_7+\beta_0+3\beta_8.
\end{equation}
\begin{figure}[hbt]
\begin{center}
\includegraphics[width=0.8\textwidth]{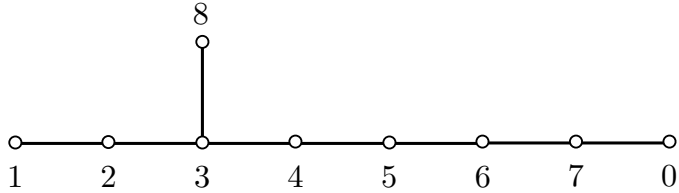}
\caption{Dynkin diagram for the root lattice $\bigoplus_{i=0}^8\mathbb{Z}\beta_i$.}
\label{fig:dynkin_E8(1)}
\end{center}
\end{figure}

We define reflections $w_i,\, i=0,\dots,8$, by 
\begin{equation}
 w_i(v)=v+(v|\beta_i)\beta_i,
\end{equation}
for all $v\in {\rm Pic}(X)$. 
The fundamental relations of affine Weyl group of type $E_8^{(1)}$ hold:
\begin{subequations}
\begin{align}
 {w_i}^2&=1,\quad i=0,\dots,8,\\ 
 (w_iw_{i+1})^3&=1,\quad i=1,\dots,6,\\
 (w_0w_7)^3&=1,\\ 
 (w_3w_8)^3&=1,\\
 (w_iw_j)^2&=1,\quad \text{otherwise}.
\end{align}
\end{subequations}
Left action of the time evolution $\varphi_a$ on Pic$(X)$ is given by
\begin{equation}
 \varphi_a\begin{pmatrix}H_0\\H_1\\E_1\\E_2\\E_3\\E_4\\E_5\\E_6\\E_7\\E_8\end{pmatrix}
 =\begin{pmatrix}
 5&4&-3&-3&-3&-3&-1&-1&-1&-1\\
 4&5&-3&-3&-3&-3&-1&-1&-1&-1\\
 1&1&-1&0&-1&-1&0&0&0&0\\
 1&1&0&-1&-1&-1&0&0&0&0\\
 1&1&-1&-1&-1&0&0&0&0&0\\
 1&1&-1&-1&0&-1&0&0&0&0\\
 3&3&-2&-2&-2&-2&-1&0&-1&-1\\
 3&3&-2&-2&-2&-2&0&-1&-1&-1\\
 3&3&-2&-2&-2&-2&-1&-1&-1&0\\
 3&3&-2&-2&-2&-2&-1&-1&0&-1
\end{pmatrix}
\begin{pmatrix}H_0\\H_1\\E_1\\E_2\\E_3\\E_4\\E_5\\E_6\\E_7\\E_8\end{pmatrix},
\end{equation}
and by following the way in \cite{KacVG1990:MR1104219} the time evolution $\varphi_a$ can be expressed by the elements of $W(E_8^{(1)})$ as
\begin{subequations}
\begin{align}
 \varphi_a=&{\varphi_s}^2,\\
 \varphi_s=&
   w_2w_3w_8w_4w_3w_2w_1w_2w_3w_4
   w_5w_6w_7w_0w_8w_3w_4w_5w_6w_7\notag\\
 &w_0w_2w_3w_4w_5w_6w_8w_3w_4w_5
   w_6.
   \label{eqn:defn_phi_s}
\end{align}
\end{subequations}
Note that Sakai's elliptic discrete Painlev\'e equation is given by\cite{MSY2003:MR1958273}
\begin{align}
 T=&w_1w_2w_3w_8w_4w_3w_2w_5w_4w_3w_8w_6w_5w_4
 w_3w_2w_7w_6w_5w_4w_3w_8\notag\\
 &w_0w_7w_6w_5w_4w_3w_2
 w_1w_2w_3w_4w_5w_6w_7w_0
 w_8w_3w_4w_5w_6w_7w_2\notag\\
 &w_3w_4w_5w_6w_8w_3w_4w_5w_2w_3w_4w_8w_3w_2.
\end{align}
\subsection{Birational representation of $W(E_8^{(1)})$}
Now we are ready to use the representation of $W(E_8^{(1)})$ given in \cite{MSY2003:MR1958273}.
Let us consider the general setting of base points \eqref{eqn:base_p_prime_1} by
\begin{equation}\label{eqn:base_p_prime_2}
 p'_i:(F,G)=\Big(\wp(u_i+b),\wp(u_i-b)\Big),\quad i=1,\dots,8.
\end{equation}
The left action of $w_i$ is given as
\allowdisplaybreaks{
\begin{subequations}\label{eqn:actions_wi_all}
\begin{align}
 &w_1:\,(b,F,G)\mapsto (-b,G,F),
 &&w_i:\,(u_{i-1},u_i)\mapsto(u_i,u_{i-1}), ~i=3,\dots,7,\label{eqn:action_w_i}\\
 &w_0:\,(u_7,u_8)\mapsto(u_8,u_7),
 &&w_8:\,(u_1,u_2)\mapsto(u_2,u_1),
\end{align}
\begin{align}
 &w_2(u_1)=u_1-\cfrac{3(2b+u_1+u_2)}{4},\quad
 &&w_2(u_2)=u_2-\cfrac{3(2b+u_1+u_2)}{4},\\
 &w_2(u_3)=u_3+\cfrac{2b+u_1+u_2}{4},\quad
 &&w_2(u_4)=u_4+\cfrac{2b+u_1+u_2}{4},\\
 &w_2(u_5)=u_5+\cfrac{2b+u_1+u_2}{4},\quad
 &&w_2(u_6)=u_6+\cfrac{2b+u_1+u_2}{4},\\
 &w_2(u_7)=u_7+\cfrac{2b+u_1+u_2}{4},\quad
 &&w_2(u_8)=u_8+\cfrac{2b+u_1+u_2}{4},\\
 &w_2(b)=b-\cfrac{2b+u_1+u_2}{4},
\end{align}
\begin{align}
 \cfrac{w_2(G)-\wp\left(2b-\cfrac{u_1-u_2}{2}\right)}{w_2(G)-\wp\left(2b+\cfrac{u_1-u_2}{2}\right)}
 =&\cfrac{\left(F-\wp(u_2+b)\right)\left(G-\wp(b-u_1)\right)\left(\wp(2b)-\wp(b-u_2)\right)}
  {\left(F-\wp(u_1+b)\right)\left(G-\wp(b-u_2)\right)\left(\wp(2b)-\wp(b-u_1)\right)}\notag\\
 &\times\cfrac{\wp\left(b-\cfrac{u_1+u_2}{2}\right)-\wp\left(2b-\cfrac{u_1-u_2}{2}\right)}
 {\wp\left(b-\cfrac{u_1+u_2}{2}\right)-\wp\left(2b+\cfrac{u_1-u_2}{2}\right)}.
\end{align}
\end{subequations}
}
Therefore, the time evolution $\varphi_a$ acts on the parameters as
\begin{align}
 \varphi_a:&(u_1,u_2,u_3,u_4,u_5,u_6,u_7,u_8)\notag\\
 &\mapsto(u_2+\lambda,u_1+\lambda,u_4+\lambda,u_3+\lambda,u_6-\lambda,u_5-\lambda,u_8-\lambda,u_7-\lambda),
\end{align}
where $\lambda=\frac{1}{2}\sum_{i=1}^8u_i$ is invariant under the action of $W(E_8^{(1)})$.
Since $\varphi_a$ is not a translation on the parameter space
\begin{equation}
 (u_1,u_2,u_3,u_4,u_5,u_6,u_7,u_8,b),
\end{equation}
we cannot regard it as a time evolution of difference equation.
However, by considering $\varphi_a$ on the subspace of parameters given by the specialization
\begin{equation}\label{eqn:subspace}
 u_1=u_2,\quad u_3=u_4,\quad u_5=u_6,\quad u_7=u_8,
\end{equation}
obtained by comparing \eqref{eqn:base_p_prime_1} and \eqref{eqn:base_p_prime_2},
we find that $\varphi_a$ describes the translational motion on the subspace
\begin{equation}
 \varphi_a:(u_1,u_3,u_5,u_7)
 \to(u_1+\lambda,u_3+\lambda,u_5-\lambda,u_7-\lambda).
\end{equation}
The process of deriving discrete dynamical systems of Painlev\'e type from elements of affine Weyl groups that are of infinite order (but that are not necessarily translations) 
by taking a projection on an appropriate subspace of parameters
is called a {\it projective reduction} \cite{KNT2011:MR2773334}.
\subsection{Symmetry group for System \eqref{eqn:rcgsys}}
As seen above, we have a group $W(E_8^{(1)})$ acting on the parameters $u_i$, $i=1,\dots,8$, and $b$.
Now we introduce $\Omega\subset W(E_8^{(1)})$ as the subgroup of elements that preserve the relation \eqref{eqn:subspace},
that is,
transformations $\rho\in W(E_8^{(1)})$ which satisfy
\begin{equation}\label{eqn:symmetry_condition1}
 \rho(u_1)=\rho(u_2),\quad \rho(u_3)=\rho(u_4),\quad \rho(u_5)=\rho(u_6),\quad \rho(u_7)=\rho(u_8),
\end{equation}
under the condition \eqref{eqn:subspace}.
The symmetry group for System \eqref{eqn:rcgsys} is then defined by
\begin{equation}
 \Omega
 =\left\{\left.\rho\in W(E_8^{(1)})\right|
 \text{$\rho$ satisfies \eqref{eqn:symmetry_condition1} under the condition \eqref{eqn:subspace}}\right\}.
\end{equation}
\begin{lemma}\label{lemma:symmetry}
The symmetry group for System \eqref{eqn:rcgsys} is given by
\begin{equation}
 \Omega=\langle w_0,w_1,w_2,r_1,w_4,r_2,w_6,r_3,w_8\rangle,
\end{equation}
where
\begin{equation}\label{eqn:def_r123}
 r_1=w_3w_4w_8w_3,\quad
 r_2=w_5w_4w_6w_5,\quad
 r_3=w_7w_6w_0w_7.
\end{equation}
\end{lemma}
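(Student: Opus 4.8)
The plan is to prove the two inclusions $\langle w_0,w_1,w_2,r_1,w_4,r_2,w_6,r_3,w_8\rangle\subseteq\Omega$ and $\Omega\subseteq\langle w_0,w_1,w_2,r_1,w_4,r_2,w_6,r_3,w_8\rangle$ separately, guided throughout by a geometric reading of the specialization \eqref{eqn:subspace}. That condition cuts out the subspace $V_0=\{u_1=u_2,\,u_3=u_4,\,u_5=u_6,\,u_7=u_8\}$, which is exactly the intersection of the reflection hyperplanes of the four mutually orthogonal roots $\beta_8=E_1-E_2$, $\beta_4=E_3-E_4$, $\beta_6=E_5-E_6$, $\beta_0=E_7-E_8$; equivalently $V_0$ is the fixed-point set of the commuting involution $\iota=w_0w_4w_6w_8$. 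Thus $\Omega$ is precisely the setwise stabilizer of $V_0$ in $W(E_8^{(1)})$, and I expect the listed generators to split into three roles: the four intra-pair reflections $w_0,w_4,w_6,w_8$ that fix $V_0$ pointwise, the three elements $r_1,r_2,r_3$ that permute the four pairs as blocks, and the two reflections $w_1,w_2$ that act nontrivially along $V_0$.

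For the inclusion $\supseteq$ I would read off the parameter actions \eqref{eqn:actions_wi_all}: $w_1$ fixes every $u_i$, $w_2$ shifts $u_1,u_2$ by a common amount and $u_3,\dots,u_8$ by another common amount, and each of $w_0,w_4,w_6,w_8$ transposes the two members of a single pair. In every case the relations \eqref{eqn:symmetry_condition1} survive under \eqref{eqn:subspace}. It then remains to compute the permutations induced by the $r_j$ from \eqref{eqn:def_r123}; a short calculation gives $r_1=(u_1u_3)(u_2u_4)$, $r_2=(u_3u_5)(u_4u_6)$ and $r_3=(u_5u_7)(u_6u_8)$, so that $r_1,r_2,r_3$ interchange adjacent pairs as whole blocks and hence also lie in $\Omega$. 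This disposes of one inclusion by routine verification.

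For the reverse inclusion, take $\rho\in\Omega$. Its action on $(u_1,\dots,u_8)$ is an affine map whose permutation part must carry the partition into the four pairs to itself, so that part lies in the hyperoctahedral group $S_2\wr S_4$ of signed permutations of the four blocks. Since $w_0,w_4,w_6,w_8$ generate the block-internal factor $(\mathbb{Z}/2)^4$ and $r_1,r_2,r_3$ generate the $S_4$ permuting the blocks, I can multiply $\rho$ on the left by a word in these generators to reduce to the case where each pair is fixed as a set and the induced block permutation is trivial. The residual element then acts on the parameters only through the uniform shift supplied by $w_2$ (together with its conjugates by the generators already used) and the $(b,F,G)$-involution supplied by $w_1$, so it lies in the generated subgroup, completing the argument. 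Structurally this is the assertion that $\Omega$ is the normalizer in $W(E_8^{(1)})$ of the parabolic-type subgroup $\langle w_0,w_4,w_6,w_8\rangle$: the four intra-pair reflections form the kernel of the restriction map $\Omega\to\mathrm{Aut}(V_0)$, while $w_1,w_2,r_1,r_2,r_3$ descend to the five simple reflections that realize the folding of $E_8^{(1)}$ onto $F_4^{(1)}$ along $V_0$.

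The main obstacle is precisely this last reduction step. Because $W(E_8^{(1)})$ is infinite, controlling the residual element is not a finite permutation check: one must show that every translation preserving $V_0$ is already generated by conjugates of $w_2$ together with the pair reflections, and that no element involving the cross-pair reflections $w_3,w_5,w_7$ --- which individually violate \eqref{eqn:symmetry_condition1} --- can belong to $\Omega$ except in the combinations that reconstitute $r_1,r_2,r_3$. I would settle this either by invoking the general description of normalizers of reflection subgroups (a Steinberg-type fixed-subgroup statement in the affine setting), or by a reduced-word analysis showing that any occurrence of $w_3,w_5,w_7$ in a word representing an element of $\Omega$ is forced to appear inside the blocks $r_j$.
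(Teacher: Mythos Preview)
Your inclusion $\langle w_0,w_1,w_2,r_1,w_4,r_2,w_6,r_3,w_8\rangle\subseteq\Omega$ is correct and checked exactly as you describe. The reverse inclusion, however, is not established. The step ``the permutation part must carry the partition into the four pairs to itself, so lies in $S_2\wr S_4$'' fails because the action of $W(E_8^{(1)})$ on $(u_1,\dots,u_8)$ is not by (signed) permutation matrices: $w_2$ acts by a genuinely non-permutation affine transformation (see \eqref{eqn:actions_wi_all}), so an arbitrary $\rho\in\Omega$ has no well-defined ``permutation part'' to which your hyperoctahedral reduction applies. Consequently the assertion that after stripping off a word in $w_0,w_4,w_6,w_8,r_1,r_2,r_3$ ``the residual element then acts \dots\ only through the uniform shift supplied by $w_2$ \dots\ and the involution supplied by $w_1$'' has no justification. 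You acknowledge this in the final paragraph and propose either a Brink--Howlett-type normalizer computation for the parabolic $A_1^{\,4}\subset E_8^{(1)}$ or a reduced-word analysis, but neither is carried out, and neither gives the answer by inspection.

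The paper's proof (Appendix~\ref{section:proof_symmetry}) takes a different, explicitly computational route. It uses the decomposition $W(E_8^{(1)})=W(E_8)\ltimes\langle T_0,\dots,T_8\rangle$ to show that, modulo a candidate group $\overline{\Omega}$ (the listed generators together with four explicit translations), every element is equivalent to some $w'\,T_3^{a}T_4^{b}T_5^{c}T_7^{d}$ with $w'$ ranging over double-coset representatives in $H\backslash W(E_8)/H$, where $H=\langle w_1,w_2,w_4,w_6,w_8,r_1,r_2\rangle$. A \textsc{Magma} enumeration produces $94$ double cosets. A linear criterion (Lemma~\ref{lemma:PR_test}: $w$ must send the vector $(u_1{-}u_2,\,u_3{-}u_4,\,u_5{-}u_6,\,u_7{-}u_8)$ to a $\mathrm{GL}_4$-image of itself plus an integer multiple of $2\lambda$) is then tested on each representative; only two nontrivial cosets survive, and both are exhibited explicitly as words in $\overline{\Omega}$. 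Finally the auxiliary translations in $\overline{\Omega}$ are shown to be redundant. There is no conceptual shortcut in the paper; the finiteness of the double-coset set is what makes the verification tractable.
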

The proof of Lemma \ref{lemma:symmetry} is given in Appendix \ref{section:proof_symmetry}.

We note that transformations $\varphi_a$ and $\varphi_s$ are elements of $\Omega$ since
\begin{equation}\label{eqn:wr_phi}
 \varphi_a={\varphi_s}^2,\quad
 \varphi_s=w_2r_1w_2w_1w_2r_1w_2r_2r_3r_1r_2w_0w_6.
\end{equation}
We now quotient out the subgroup of elements which are identity transformations on the parameters and variables under the conditions \eqref{eqn:subspace}.
This includes, in particular, the generators $w_i$, $i=0,4,6,8$, so they are removed from symmetry groups for System \eqref{eqn:rcgsys}.
The symmetry group becomes
\begin{equation}
 \langle w_1,w_2,r_1,r_2,r_3\rangle.
\end{equation}
Therefore, we obtain the following theorem.
\begin{thm}
The symmetry group for System \eqref{eqn:rcgsys} gives a representation of an affine Weyl group of type $F_4^{(1)}$,
\begin{equation}
\langle w_1,w_2,r_1,r_2,r_3\rangle=W(F_4^{(1)})
\end{equation}
\end{thm}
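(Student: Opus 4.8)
The plan is to prove that $\langle w_1,w_2,r_1,r_2,r_3\rangle$ is isomorphic to $W(F_4^{(1)})$ by verifying that these five generators satisfy exactly the Coxeter relations defining the affine Weyl group of type $F_4^{(1)}$, and that the assignment of generators to nodes respects the $F_4^{(1)}$ Dynkin diagram. The affine Dynkin diagram of $F_4^{(1)}$ has five nodes in a line, with a double bond in the middle: labelling the nodes so that the sequence of bonds is single--single--double--single, the Coxeter matrix entries are $m=3$ for adjacent single-bonded pairs, $m=4$ for the double-bonded pair, and $m=2$ for non-adjacent pairs. First I would fix the correspondence between the generators and the nodes --- the natural choice being $w_1,w_2,r_1$ along the single-bonded end, $r_1,r_2$ joined by the double bond, and $r_2,r_3$ with $r_3$ at the other single-bonded end (the precise matching to be read off from the computation below) --- and then check that each generator is an involution and that each pair satisfies the prescribed braid relation.

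The key computational step is to determine, for the three composite elements $r_1=w_3w_4w_8w_3$, $r_2=w_5w_4w_6w_5$, and $r_3=w_7w_6w_0w_7$, how they act. The crucial observation is that each $r_j$ is conjugate to a reflection: since $r_1$ has the form $w_3w_4w_8w_3$ with $w_3$ an involution, one computes $r_1=s_{w_3(\beta_4)}\cdots$ using the standard identity $w_\alpha s_\beta w_\alpha^{-1}=s_{w_\alpha(\beta)}$, so that the $r_j$ are themselves reflections in certain positive roots obtained by folding. Concretely, I would compute the roots $\gamma_1,\gamma_2,\gamma_3$ such that $r_j=s_{\gamma_j}$ (a reflection in the root $\gamma_j$), using the explicit simple roots $\beta_i$ listed in the excerpt and the defining formula $w_i(v)=v+(v|\beta_i)\beta_i$. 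This identifies the group as one generated by reflections in the five roots $\beta_1,\beta_2,\gamma_1,\gamma_2,\gamma_3$.

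Once the five roots are in hand, the relations follow from their inner products: for reflections $s_\mu,s_\nu$ in a root system normalized by $(\mu|\mu)=(\nu|\nu)=-2$, the order of $s_\mu s_\nu$ is determined by $(\mu|\nu)$ via $m=2,3,4,6$ according as $(\mu|\nu)=0,1,\sqrt{2},\sqrt{3}$ (in absolute value, with the appropriate sign conventions for the indefinite form here). The plan is therefore to tabulate all ten pairwise inner products among $\beta_1,\beta_2,\gamma_1,\gamma_2,\gamma_3$ and read off the Coxeter matrix, confirming it matches $F_4^{(1)}$. In particular I expect $(\gamma_1|\gamma_2)$ to be the entry producing the $m=4$ double bond, while the pairs $(\beta_1,\beta_2)$, $(\beta_2,\gamma_1)$, $(\gamma_2,\gamma_3)$ give $m=3$ and the remaining pairs give $m=2$. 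The involution property $r_j^2=1$ is immediate from the reflection identification.

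The main obstacle will be the bookkeeping in showing that the folded roots $\gamma_j$ have the correct \emph{lengths} relative to $\beta_1,\beta_2$, since it is precisely the asymmetry of root lengths that creates the double bond distinguishing $F_4^{(1)}$ from a simply-laced diagram. Because the intersection form here is the indefinite Picard form rather than a positive-definite Cartan--Killing form, I must be careful that the usual dictionary between $(\mu|\nu)$ and braid-relation orders still applies; the cleanest way to secure this is to verify directly, by explicit composition in $W(E_8^{(1)})$ using the action \eqref{eqn:actions_wi_all}, that $(r_1r_2)^4=1$ while $(r_1r_2)^2\neq 1$, and similarly $(w_2r_1)^3=1$, $(r_2r_3)^3=1$, etc. Checking these orders by hand composition is the laborious part, but each is a finite calculation, and once all five involutions and their ten pairwise orders match the Coxeter data of $F_4^{(1)}$ --- together with the standard fact that these relations present the affine Weyl group --- the isomorphism follows.
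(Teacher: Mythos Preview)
Your plan contains a structural error: the elements $r_j$ are \emph{not} single reflections. For example, since $w_4$ and $w_8$ commute (the roots $\beta_4=E_3-E_4$ and $\beta_8=E_1-E_2$ are orthogonal), one has
\[
r_1=w_3w_4w_8w_3=(w_3w_4w_3)(w_3w_8w_3)=s_{E_2-E_4}\,s_{E_1-E_3},
\]
a product of two \emph{commuting} reflections, and similarly $r_2=s_{E_3-E_5}s_{E_4-E_6}$, $r_3=s_{E_5-E_7}s_{E_6-E_8}$. So there are no roots $\gamma_j$ with $r_j=s_{\gamma_j}$, and the plan to read off braid orders from pairwise inner products of five roots collapses. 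This is precisely the folding mechanism: $w_1,w_2$ play the role of long simple reflections, while $r_1,r_2,r_3$ are the short ones, each coming from a pair of orthogonal $E_8^{(1)}$-roots. Your expectation about where the double bond sits is accordingly wrong: the order-$4$ relation is $(w_2r_1)^4=1$ (long meets short), while $(r_1r_2)^3=1$; the paper's Dynkin diagram is $w_1\text{---}w_2\Longrightarrow r_1\text{---}r_2\text{---}r_3$.

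Your fallback --- verify the Coxeter relations by direct composition --- is exactly what the paper does, and with the corrected orders this goes through. But note a further gap: establishing the relations only shows the group is a \emph{quotient} of $W(F_4^{(1)})$. The paper closes this by observing that the action \eqref{eqn:actions_wi_all} on the parameters $(u_1,\dots,u_8,b)$ is linear, so faithfulness of the representation (hence the isomorphism, not merely a surjection) can be checked directly on that linear action. You should include this step.
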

\begin{proof}
The generators $w_i$, $i=1,2$, and $r_i$, $i=1,2,3$, satisfy the following relations
\begin{subequations}
\begin{align}
 &{w_1}^2={w_2}^2={r_1}^2={r_2}^2={r_3}^2=1,\\
 &(w_1w_2)^3=(w_1r_1)^2=(w_1r_2)^2=(w_1r_3)^2=1,\\
 &(w_2r_1)^4=(w_2r_2)^2=(w_2r_3)^2=1,\quad
 (r_1r_2)^3=(r_1r_3)^2=1,\quad
 (r_2r_3)^3=1,
\end{align}
\end{subequations}
and so $\langle w_1,w_2,r_1,r_2,r_3\rangle$ gives a representation of an affine Weyl group of type $F_4^{(1)}$ (see Figure \ref{fig:dynkin_F4(1)}). 
On the parameters this representation is linear (cf. \eqref{eqn:actions_wi_all}), so that its faithfulness can be checked directly.
\end{proof}

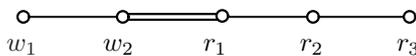
\begin{figure}[hbt]
\begin{center}
{\unitlength 0.1in%
\begin{picture}(21.9500,1.2500)(5.2000,-7.9000)%
%
\special{pn 13}%
\special{pa 700 700}%
\special{pa 1200 700}%
\special{fp}%
\put(6.8000,-8.5500){\makebox(0,0){$w_1$}}%
%
\special{pn 13}%
\special{pa 1700 700}%
\special{pa 2200 700}%
\special{fp}%
%
\special{pn 13}%
\special{pa 2200 700}%
\special{pa 2700 700}%
\special{fp}%
%
\special{sh 0}%
\special{ia 685 700 30 30 0.0000000 6.2831853}%
\special{pn 13}%
\special{ar 685 700 30 30 0.0000000 6.2831853}%
%
\special{sh 0}%
\special{ia 2685 700 30 30 0.0000000 6.2831853}%
\special{pn 13}%
\special{ar 2685 700 30 30 0.0000000 6.2831853}%
%
\special{sh 0}%
\special{ia 2185 700 30 30 0.0000000 6.2831853}%
\special{pn 13}%
\special{ar 2185 700 30 30 0.0000000 6.2831853}%
%
\special{pn 13}%
\special{pa 1210 685}%
\special{pa 1710 685}%
\special{fp}%
%
\special{pn 13}%
\special{pa 1215 715}%
\special{pa 1715 715}%
\special{fp}%
%
\special{sh 0}%
\special{ia 1200 700 30 30 0.0000000 6.2831853}%
\special{pn 13}%
\special{ar 1200 700 30 30 0.0000000 6.2831853}%
%
\special{sh 0}%
\special{ia 1715 695 30 30 0.0000000 6.2831853}%
\special{pn 13}%
\special{ar 1715 695 30 30 0.0000000 6.2831853}%
\put(11.8000,-8.5500){\makebox(0,0){$w_2$}}%
\put(16.8000,-8.5500){\makebox(0,0){$r_1$}}%
\put(21.8000,-8.5500){\makebox(0,0){$r_2$}}%
\put(26.8000,-8.5500){\makebox(0,0){$r_3$}}%
\end{picture}}%
\caption{Dynkin diagram for the affine Weyl group $W(F_4^{(1)})=\langle w_1,w_2,r_1,r_2,r_3\rangle$.}
\label{fig:dynkin_F4(1)}
\end{center}
\end{figure}

\section{Relation with Q4}\label{section:solution}
\subsection{Alternative characterisation}\label{sec:AC}
Due to its construction via periodic reduction followed by integrable deautonomization, equation \eqref{eqn:rcg1} was not originally viewed as defining some class of solutions for Q4. Here we give an alternative construction that shows how it does in fact constitute a reduction in this more usual sense.

The reduction itself is slightly different to the better-studied periodic and quasi-periodic reductions.
It comes from a constraint local to a specified path along the lattice. 
Due to locality, this kind of reduction is admissible regardless of whether the equation is autonomous.
An additional condition on the lattice parameters is therefore required in order to obtain equation \eqref{eqn:rcg1}.

The quad-equation Q4, in Jacobi form with fixed modulus $k$, is defined by the two-parameter family of four-variable polynomials
\begin{multline}\label{eqn:Q4}
Q_{a,b}(w,x,y,z):=\sn(a)(wx+yz)-\sn(b)(wy+xz)\\
-\sn(a-b)[xy+wz-k\sn(a)\sn(b)(1+wxyz)].
\end{multline}
Here this quad-equation will be considered on the most regular planar quad-graph, $\mathbb{Z}^2$, that is
\begin{equation}\label{eqn:abs}
Q_{\alpha_n,\beta_m}(v_{n,m},v_{n+1,m},v_{n,m+1},v_{n+1,m+1})=0,
\end{equation}
where $n,m\in\mathbb{Z}$, variables $v_{n,m}$ are assigned to vertices, and $\alpha_n$ and $\beta_m$ are parameters on characteristics. (In general, if two edges are on opposite sides of a quad, then,  following \cite{AdVeQ}, we say they are on the same {\it characteristic}. The characteristics define a partition of the set of edges of a quad-graph, for $\mathbb{Z}^2$ they are just horizontal and vertical strips.)

The constraint defining the first part of the reduction from system \eqref{eqn:abs} is
\begin{equation}\label{eqn:ridge}
v_{n,n+1}=v_{n+1,n}, \quad v_{n,n}=v_{n-1,n+1}, \quad n\in\mathbb{Z}.
\end{equation}
It is local to a path along a diagonal of $\mathbb{Z}^2$.
It is easily verified that suitable initial data for equation \eqref{eqn:abs} with constraint \eqref{eqn:ridge} can be the two variables $v_{0,0}$ and $v_{0,1}$, in particular, this reduction leaves two degrees of freedom.

The resulting dynamics are not integrable (in the zero-entropy sense) in general.
However, the special choice of the parameters,
\begin{equation}\label{eqn:parameter_constraint}
\alpha_n=\alpha_0+n\phi, \quad \beta_m=\beta_0-m\phi,
\end{equation}
where $\alpha_0,\beta_0,\phi\in\mathbb{C}$ are free constants, results in quadratic degree growth (observed experimentally) along any direction of the lattice.
The equation \eqref{eqn:rcg1} is recovered by restricting attention to the path of the constraint \eqref{eqn:ridge}, specifically, identifying the variables of \eqref{eqn:rcg1} via the relations
\begin{equation}\label{eqn:con1}
(u_{2n},u_{2n+1})=\left(\frac{v_{n,n}}{\sqrt{k}},\frac{v_{n,n+1}}{\sqrt{k}}\right),\quad n\in\mathbb{Z}, 
\end{equation}
and the parameters of \eqref{eqn:rcg1} via the relations
\begin{equation}\label{eqn:con2}
\quad z_0=(\alpha_0-\beta_0)/2,\ \gamma_e=(\alpha_0+\beta_0)/2,\ \gamma_o=(\phi-\alpha_0-\beta_0)/2.
\end{equation}

In summary we formulate the following:
\begin{prop}
The lattice system \eqref{eqn:abs} under the constraint on variables \eqref{eqn:ridge} and choice of parameters \eqref{eqn:parameter_constraint} defines a system with two degrees of freedom, which, on the subset of variables $v_{n,n},v_{n,n+1},n\in\mathbb{Z}$, coincides with \eqref{eqn:rcg1} modulo identifications \eqref{eqn:con1} and \eqref{eqn:con2}.
\end{prop}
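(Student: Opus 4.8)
The plan is to read the constrained lattice as a staircase initial value problem along the reduction path and then to eliminate the off-path variables, showing that the two Q4 faces adjacent to the path collapse to the even- and odd-indexed instances of \eqref{eqn:rcg1}. First I would settle the count of degrees of freedom. The two parts of \eqref{eqn:ridge} play complementary roles: the reflection $v_{n,n+1}=v_{n+1,n}$ identifies each lower off-path neighbour of the path with a path value, while $v_{n,n}=v_{n-1,n+1}$ does the same for each upper off-path neighbour. After these substitutions every elementary face straddling the path contains exactly one undetermined path vertex. Scanning the faces in order along the diagonal, the face based at $(0,0)$ fixes $v_{1,1}$, the face based at $(0,1)$ fixes $v_{1,2}$, the face based at $(1,1)$ fixes $v_{2,2}$, and so on; at each step a single quad-equation \eqref{eqn:abs} determines the new vertex uniquely (generically) from the two preceding path values. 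Hence the pair $(v_{0,0},v_{0,1})$ freely parametrises the orbit, the induced map on the path is a genuine second-order recurrence, and the system has two degrees of freedom. Because \eqref{eqn:ridge} only couples path-adjacent vertices, this closure is independent of the lattice parameters, which is the sense in which the reduction is admissible whether or not \eqref{eqn:parameter_constraint} is imposed.

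Next I would identify the recurrence. Using \eqref{eqn:ridge}, the face based at $(n,n)$ collapses to $Q_{\alpha_n,\beta_n}(v_{n,n},c,c,v_{n+1,n+1})$ with repeated value $c=v_{n,n+1}$, and the face based at $(n-1,n)$ collapses to $Q_{\alpha_{n-1},\beta_n}(v_{n-1,n},c',c',v_{n,n+1})$ with $c'=v_{n,n}$; in both cases the repeated argument occupies the two anti-diagonal slots, so the same specialisation $Q_{a,b}(w,c,c,z)=(\sn a-\sn b)c(w+z)-\sn(a-b)(c^2+wz)+k\,\sn a\,\sn b\,\sn(a-b)(1+wc^2z)$ applies. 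Substituting the rescaling $v=\sqrt{k}\,u$ of \eqref{eqn:con1} and dividing by $k$ turns the $k$-normalised polynomial \eqref{eqn:Q4} into the $k^2$-normalised three-point equation of the form \eqref{eqn:aut_Q4eqn}, now carrying the $n$-dependent parameters $(\alpha_n,\beta_n)$ on the $(n,n)$ face and $(\alpha_{n-1},\beta_n)$ on the $(n-1,n)$ face. I would then set $a=\gamma+z,\ b=\gamma-z$ and invoke the addition theorems for $\sn$ to verify the proportionality
\begin{align}
&(\sn a-\sn b):-\sn(a-b):\sn a\,\sn b\,\sn(a-b)\notag\\
&\qquad\propto\ \cn\gamma\dn\gamma\,(1-k^2{\rm sn}^4 z):-\cn z\dn z\,(1-k^2{\rm sn}^2\gamma\,{\rm sn}^2 z):({\rm cn}^2 z-{\rm cn}^2\gamma)\,\cn z\dn z,
\end{align}
where the common factor is $2\sn z/[(1-k^2{\rm sn}^2\gamma\,{\rm sn}^2 z)(1-k^2{\rm sn}^4 z)]$ and one uses ${\rm cn}^2 z-{\rm cn}^2\gamma={\rm sn}^2\gamma-{\rm sn}^2 z$. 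This exhibits each collapsed face as a scalar multiple of \eqref{eqn:rcg1}. Finally I would substitute \eqref{eqn:parameter_constraint}: on the $(n,n)$ faces $\gamma=(\alpha_n+\beta_n)/2=\gamma_e$ is constant and on the $(n-1,n)$ faces $\gamma=(\alpha_{n-1}+\beta_n)/2=-\gamma_o$, while $z=(\alpha-\beta)/2$ advances by $\phi/2$ from face to face; since \eqref{eqn:rcg1} depends on $\gamma$ and $z$ only through even functions, this reproduces exactly the alternating data of \eqref{eqn:constraints} under the dictionary \eqref{eqn:con2}, with the parity bookkeeping absorbed into the relabelling of the bi-infinite path.

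The main obstacle is the elliptic coefficient identity of the last step: this is the only computation that is not purely formal, and the whole matching hinges on the $\sn$ addition formulae collapsing the three Q4 coefficient groups into the $\cn,\dn$ coefficients of \eqref{eqn:rcg1} with a single common proportionality factor. The secondary difficulty is bookkeeping, namely confirming that the two face families feed the recurrence in the order that aligns their $(\gamma,z)$ data with the alternation $\gamma_n\in\{\gamma_e,\gamma_o\}$ and the arithmetic progression $z_n$ prescribed by \eqref{eqn:constraints}.
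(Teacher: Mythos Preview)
Your approach is correct and is exactly what the paper has in mind: the proposition there is stated without a separate proof, as a summary of the preceding discussion, and the verification you outline --- collapsing the two diagonal face-families of \eqref{eqn:abs} via \eqref{eqn:ridge}, rescaling by $\sqrt{k}$, and matching the Q4 coefficient triple to that of \eqref{eqn:rcg1} through the $\sn$ addition formulae --- is precisely the direct computation the paper leaves implicit. Your flagged bookkeeping hazard is the only place requiring real care (the $(n,n)$ face has $u_{2n+1}$ central while yielding $(z,\gamma)=(z_{2n},\gamma_e)$), and, as you say, the evenness of \eqref{eqn:rcg1} in $\gamma$ together with a one-step relabelling of the bi-infinite path absorbs this.
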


\begin{remark}
If it is chosen that $\phi=0$ in \eqref{eqn:parameter_constraint}, it follows that $v_{n,m+1}=v_{n+1,m}$ throughout the lattice as a consequence of \eqref{eqn:ridge}. In other words, the autonomous case of this reduction coincides with the $(1,1)$-periodic reduction. In that case, the system has an invariant that was found in \cite{JGTR2006:MR2271126} and its general solution is in terms of elliptic functions \cite{AN2008:MR2452416}.
\end{remark}

\begin{remark}
From the point of view of exploring natural generalisations of this reduction, it is most desirable to find a self-contained explanation for the parameter constraint \eqref{eqn:parameter_constraint}, for instance in terms of the lattice-geometry of Q4 (its multidimesionality or some extension thereof \cite{INIFano}), instead of imposing integrability from the outside.
\end{remark}

\subsection{Singular solutions}
The previous subsection identifies \eqref{eqn:rcg1} with a particular class of solutions of Q4. This allows us to apply elements of the theory of Q4 to this equation. In particular we exploit here some admissable singularity patterns \cite{ATK2011} in order to construct special solutions compatible with the reduction, but it would also be interesting to consider the action of the B\"acklund transformation on this reduction, or to investigate it in the context of the associated linear problem for Q4 \cite{Nij2002}.

Singularities in solutions of Q4 have been fundamental to characterisation of its defining polynomial \cite{ABS2009:MR2503862}. The singularities are associated naturally with lattice edges (through vanishing of edge biquadratics).
How singularities may extend globally was studied in \cite{ATK2011}.
Based on those criteria, it is straightforward to identify an admissible singularity pattern for the system \eqref{eqn:abs} that is compatible with the constraint \eqref{eqn:ridge}:
\begin{equation}\label{eqn:singular_solution1}
\begin{split}
v_{n,n}&=\sqrt{k}\sn\left(\xi_0+\sum_{j=0}^{j=n-1}\alpha_j-\sum_{j=0}^{j=n-1}\beta_j\right),\\
v_{n,n+1}&=\sqrt{k}\sn\left(\xi_0+\sum_{j=0}^{j=n-1}\alpha_j-\sum_{j=0}^{j=n}\beta_j\right),
\end{split}
\end{equation}
were $\xi_0$ is a free constant. The singular edges here connect consecutive pairs $(v_{n,n},v_{n,n+1})$ and $(v_{n,n+1},v_{n+1,n+1})$, $n\in\mathbb{Z}$.

Assuming \eqref{eqn:singular_solution1}, the problem to determine $v_{n,m}$ for $m\not\in\{n,n+1\}$ that satisfies \eqref{eqn:abs} under the constraint \eqref{eqn:ridge}, i.e., to extend the solution \eqref{eqn:singular_solution1} outside of the singular region, requires some integration. 
Interestingly, it turns out that the imposed singular region renders the system integrable, in the sense of vanishing entropy orthogonal to the path of the singularity, without the assumption \eqref{eqn:parameter_constraint}.

The parameter choice \eqref{eqn:parameter_constraint} brings \eqref{eqn:singular_solution1} to the form
\begin{equation}\label{eqn:singular_solution2}
\begin{split}
v_{n,n}&=\sqrt{k}\sn(\xi_0+n(\alpha_0-\beta_0)+n(n-1)\phi),\\
v_{n,n+1}&=\sqrt{k}\sn(\xi_0-\beta_0+n(\alpha_0-\beta_0)+n^2\phi).
\end{split}
\end{equation}
This therefore defines a particular solution of equation (\ref{eqn:rcg1}) modulo identifications (\ref{eqn:con1}) and (\ref{eqn:con2}), which can also be verified directly by substitution.
Direct comparison shows this singular solution shares important features with the known trivial solution of the elliptic Painlev\'e equation \cite{MSY2003:MR1958273}, which is given by 
\begin{equation}\label{eqn:elpsoln}
 f=\wp (q_0+2t^2/\lambda+t),\quad
 g=\wp (-q_0-2t^2/\lambda+t).
\end{equation} 
Here $t$ is the independent variable which additively shifts in increments of $\lambda$, and the parameter $q_0$ is given by the initial condition. 
Both \eqref{eqn:singular_solution2} and \eqref{eqn:elpsoln} have quadratic dependence on the independent variable, the notable differences are probably due to the fact that they do not correspond to the same component in the Weyl group of $E_8^{(1)}$.

\subsection{Similar kinds of reduction}
Here we report some further investigations of the general idea of reduction-by-a-local-constraint, looking at another example of a reduction from Q4 on $\mathbb{Z}^2$.

Thus, consider again the system \eqref{eqn:abs}, with the constraint
\begin{equation}\label{eqn:odd}
v_{n-1,n+1}=v_{n,n}=v_{n+1,n-1},\quad n\in\mathbb{Z},
\end{equation}
which is again local to the $\mathbb{Z}^2$ diagonal.
It is straightforward to verify that suitable initial data is $v_{1,0},v_{0,0},v_{0,1}$, so that, in particular, there remain three degrees of freedom.

In general this reduction is not integrable, however, choosing parameters in the form
\begin{equation}\label{eqn:constraint3}
\alpha_n=\alpha_0+n(\alpha_1-\alpha_0), \quad \beta_m=\beta_0+m(\beta_1-\beta_0),
\end{equation}
where $\alpha_0,\alpha_1,\beta_0,\beta_1\in\mathbb{C}$ are free constants, is sufficient for integrability.
This is observed experimentally. Iterates on the subset of vertices where $n+m$ is odd have cubic degree growth in the initial data, whilst on the sites where $n+m$ is even, the growth is quadratic.

The interesting degree-growth pattern can be understood by bearing in mind the relation between quad-equation \eqref{eqn:Q4} and its associated Toda-type system, the theory of which has been developed in \cite{AdlSur}.
Specifically, a consequence of \eqref{eqn:abs} is an equation relating, for any $n,m\in\mathbb{Z}$, the five variables 
\begin{equation}\label{eqn:five}
v_{n,m},\ v_{n+1,m+1},\ v_{n-1,m+1},\ v_{n+1,m-1},\ v_{n-1,m-1}.
\end{equation}
An expression for it, obtained directly from the three-leg form of the quad-equation \cite{ABS2003:MR1962121}, is as follows:
\begin{multline}\label{eqn:todatype1}
F(\xi_{n,m},\xi_{n+1,m+1},\alpha_{n+1}-\beta_{m+1})
F(\xi_{n,m},\xi_{n-1,m-1},\alpha_{n}-\beta_{m})=\\
F(\xi_{n,m},\xi_{n-1,m+1},\alpha_{n}-\beta_{m+1})
F(\xi_{n,m},\xi_{n+1,m-1},\alpha_{n+1}-\beta_{m}),
\end{multline}
where
\begin{equation}
v_{n,m}=\sqrt{k}\sn(\xi_{n,m}), \quad F(\xi,\zeta,\gamma) := \frac{\sn(\zeta)-\sn(\xi+\gamma)}{\sn(\zeta)-\sn(\xi-\gamma)}\frac{\Theta(\xi+\gamma)}{\Theta(\xi-\gamma)}.
\end{equation}
The way \eqref{eqn:todatype1} is written is more conceptual, but obscures the rationality. It is, in fact, equivalent to vanishing of a polynomial in the five variables \eqref{eqn:five}, which is degree two in $v_{n,m}$, and degree one in each of the four remaining variables.

Supposing $\xi_{n,m}$ with $n+m$ even satisfy \eqref{eqn:todatype1}, then, say, $v_{1,0}$ can be chosen freely, and the equation \eqref{eqn:abs} can be used to consistently determine the solution throughout the remainder of the lattice, i.e., the $v_{n,m}$ where $n+m$ is odd.
In this way, solution of the original system \eqref{eqn:abs} is recovered from any solution of \eqref{eqn:todatype1}, demonstrating their equivalence on the level of solutions.
That the iteration can be decoupled in this way gives the mechanism leading to cubic degree growth for variables on the odd lattice sites.

Combined with \eqref{eqn:odd}, the equation \eqref{eqn:todatype1} yields, for each $n\in\mathbb{Z}$, a relation between variables, $v_{n-1,n-1},v_{n,n}$ and $v_{n+1,n+1}$ (which are on the even lattice sites). 
Manipulating \eqref{eqn:todatype1} to obtain its polynomial form, imposing \eqref{eqn:odd}, and substituting also the prescribed parameters \eqref{eqn:constraint3}, the relation is found to be
\begin{multline}\label{eqn:todatype2}
\left[k(1-k\omega)\sn(2\gamma)\sn(2a)\sn(2b)-\omega/\sn(\gamma-a-b)-\omega/\sn(\gamma+a+b)\right](u-\utilde{u}u^3\tilde{u})\\
+\left[\omega/\sn(\gamma-a+b)\right](\utilde{u}-\tilde{u}u^4)
+\left[\omega/\sn(\gamma+a-b)\right](\tilde{u}-\utilde{u}u^4)\\
+\left[\sn(\gamma+a-b)-\sn(\gamma-a+b)-(1-k\omega)(\sn(2a)-\sn(2b))\right](\utilde{u}u^2-\tilde{u}u^2)\\
-\left[\sn(\gamma+a+b)+\sn(\gamma-a-b)-(1-k\omega)\sn(2\gamma)\right](\utilde{u}u\tilde{u}-u^3)=0.
\end{multline}
Here 
\begin{multline}\label{eqn:TGvars}
\utilde{u}=v_{n-1,n-1}, \ u=v_{n,n}, \ \tilde{u}=v_{n+1,n+1}, \ \gamma=\gamma_0+2(a-b)n,\\
\omega=k\sn(\gamma+a+b)\sn(\gamma-a+b)\sn(\gamma+a-b)\sn(\gamma-a-b),
\end{multline}
and $\gamma_0$, $a$ and $b$ are free constants that replace $\alpha_0,\alpha_1,\beta_0$ and $\beta_1$,
\begin{equation}\label{eqn:TGparams}
a=\frac{1}{2}\left(\alpha_1-\alpha_0\right), \quad b=\frac{1}{2}\left(\beta_1-\beta_0\right), \quad \gamma_0=\frac{1}{2}\left(\alpha_0+\alpha_1-\beta_0-\beta_1\right).
\end{equation}
Like \eqref{eqn:rcg1}, the equation \eqref{eqn:todatype2} is an integrable, second-order, non-autonomous discrete equation that defines a natural class of solutions of Q4, and which merits investigation as an interesting sub-case of the elliptic Painlev\'e equation.

For the equation \eqref{eqn:todatype2} it is straightforward to write down a trivial solution
\begin{equation}\label{eqn:singsol2}
u=\sqrt{k}\sn(\xi_0+\gamma_0n+(a-b)n^2),
\end{equation}
where $\xi_0$ may be chosen freely, as a consequence of a singular solution of \eqref{eqn:abs} that is compatible with constraint \eqref{eqn:odd}. This kind of singular solution is easily verified using \eqref{eqn:todatype1}.
\begin{remark}
The natural trigonometric and rational limits of equations \eqref{eqn:rcg1} and \eqref{eqn:todatype2} have general solution that is expressible in terms of their respective function class.
It is possible that equations \eqref{eqn:rcg1} and \eqref{eqn:todatype2} have a general solution that is expressible in terms of elliptic functions.
\end{remark}

\section{Conclusion and Discussion}\label{section:conc}
In this paper we have studied equation \eqref{eqn:rcg1} as an integrable, second-order, discrete equation, with elliptic non-autonomous term.
It was obtained originally by Ramani {\it et al.} \cite{RCG2009:MR2525848}, as an integrable deautonomization of the $(1,1)$ periodic reduction of Q4.

We found the initial value space to be that of ell-$A_0^{(1)}$ in Sakai's list \cite{SakaiH2001:MR1882403}, placing the equation within the geometric framework of the discrete Painlev\'e equations.
Sakai provided the beautiful elliptic discrete Painlev\'e equation as the most general case; constructing it as a translation arising from $W(E_8^{(1)})$. 
Our results show that equation \eqref{eqn:rcg1} has symmetry which is a proper sub-group of $W(E_8^{(1)})$.
It corresponds to a `projective reduction'; a parameter sub-case of the general construction in which the \lq square-root\rq\ of a translation yields the difference equation.

We have established a more direct contact with Adler's discrete analog of the Krichever-Novikov equation \cite{AdlerVE1998:MR1601866}, known as Q4, showing that equation \eqref{eqn:rcg1} defines a particular class of solutions.
The notion we introduce is a constraint local to the diagonal of the $\mathbb{Z}^2$ lattice \eqref{eqn:ridge}, which yields a two-degree-of-freedom system.
In general, this system has non-vanishing entropy. 
It is then a special prescription of the lattice parameters \eqref{eqn:parameter_constraint} that picks out a zero-entropy case, giving rise to \eqref{eqn:rcg1}.
This view allows the application of elements of the theory of Q4 to \eqref{eqn:rcg1}. For instance, from a singular solution of Q4 we obtain directly a one-parameter solution for \eqref{eqn:rcg1}. This is identifiable with the known trivial solution of the Painlev\'e equation.
Also we provide \eqref{eqn:todatype2}, which is a new equation obtained by a similar, but more symmetric reduction.
It is in the same class as \eqref{eqn:rcg1} and has similar features, so it provides an interesting candidate for further study.

\section*{Funding}
This work has been supported by 
an Australian Postdoctoral Fellowship DP110104151, 
an Australian Postgraduate Award, 
an Australian Laureate Fellowship FL120100094 and grant DP130100967 from the Australian Research Council, 
and a JSPS Grant-in-Aid for Scientific Research 22$\cdot$4366. 
\section*{Acknowledgement}
The authors would like to sincerely thank Prof. M. Noumi for helpful discussions. 
We would also thank Profs T. Masuda, T. Takenawa, T. Tsuda and Y. Yamada 
and Drs P. McNamara and Y. Shi
for their valuable insights and encouragement.
\appendix
\section{System associated with $A_1^{(1)}$-surface}\label{section:qA0}
In \cite{CarsteaAS2013:MR3087958}, Carstea showed that the rational surface of Equation \eqref{eqn:aut_Q4eqn} is of $A_1^{(1)}$-type.
In this section, we start with the setting where eight base points $p_i$, $i=1,\dots,8$,
arranged in pairs with four lying on each of lines $L_1$ and $L_2$ given by
\begin{equation}
 L_1:~f-v_1g=0,\quad
 L_2:~f-v_2g=0,
\end{equation}
where $v_1v_2=1$.
This is a generalization of the setting for Equation \eqref{eqn:aut_Q4eqn}.
(See Section \ref{subsection:autonomous}.) 
We show that another non-autonomous form of Equation \eqref{eqn:aut_Q4eqn} emerges 
as a dynamical system in this generalized setting.

Let
\begin{subequations}\label{eqn:appendix_bps}
\begin{align}
 &p_1: (f,g)=(-v_1u_1,-u_1),
 &&p_2: (f,g)=(-v_1u_2,-u_2),\\
 &p_3: (f,g)=(-v_1u_3,-u_3),
 &&p_4: (f,g)=(-v_1u_4,-u_4),\\
 &p_5: (f,g)=(-v_2u_5,-u_5), 
 &&p_6: (f,g)=(-v_2u_6,-u_6),\\
 &p_7: (f,g)=(-v_2u_7,-u_7),
 &&p_8: (f,g)=(-v_2u_8,-u_8),
\end{align}
\end{subequations}
where points $p_i$, $i=1,...,4$, are on the line $L_1$ 
and points $p_j$, $j=5,...,8$, are on the line $L_2$.

Let $\epsilon: X \to \mathbb{P}^1\times\mathbb{P}^1$ denotes blow up of $\mathbb{P}^1\times\mathbb{P}^1$ at the points \eqref{eqn:appendix_bps}.
Pic$(X)$ is given by
\begin{equation}
 {\rm Pic}(X)=\mathbb{Z}H_f\bigoplus\mathbb{Z}H_g\bigoplus_{i=1}^8\mathbb{Z}e_i,
\end{equation}
where $H_f$, $H_g$, $e_i=\epsilon^{-1}(p_i)$ are the linear equivalence class of total transform of $f$=constant, 
that of $g$=constant and the total transform of the point of the $i$-th blow up, respectively.
The intersection form $(|)$ is defined by 
\begin{equation}
 (H_f|H_g)=1,\quad
 (H_f|H_f)=(H_g|H_g)=(H_f|e_i)=(H_g|e_i)=0,\quad
 (e_i|e_j)=-\delta_{ij},
\end{equation}
where $1\le i\le 8$, $1\le j\le 8$ are integers and
the anti-canonical divisor of $X$ is uniquely decomposed into the prime divisors:
\begin{equation}
 \delta=-K_X=2H_f+2H_g-\sum_{i=1}^8 e_i=D_1+D_2,
\end{equation}
where
\begin{equation}
 D_1=H_f+H_g-e_1-e_2-e_3-e_4,\quad
 D_2=H_f+H_g-e_5-e_6-e_7-e_8.
\end{equation}
Thus, we identify the surface $X$ as being type $A_1^{(1)}$ in Sakai's list.
Furthermore, the orthogonal complement is $\delta^\bot=\{\alpha_i|i=0,\dots,7\}$ where
\begin{subequations}
\begin{align}
 &\alpha_0=H_f-H_g,
 &&\alpha_1=-e_1+e_2,\\
 &\alpha_2=-e_2+e_3,
 &&\alpha_3=-e_3+e_4,\\
 &\alpha_4=H_g-e_4-e_5,
 &&\alpha_5=e_5-e_6,\\
 &\alpha_6=e_6-e_7,
 &&\alpha_7=e_7-e_8,
\end{align}
\end{subequations}
and satisfy
\begin{equation}
 \delta=\alpha_1+2\alpha_2+3\alpha_3+4\alpha_4+3\alpha_5+2\alpha_6+\alpha_7+2\alpha_0.
\end{equation}
The root lattice $\bigoplus_{i=0}^7\mathbb{Z}\alpha_i$ corresponds to the Dynkin diagram of $E_7^{(1)}$ since
\begin{equation}
 (\alpha_i|\alpha_j)
 =\begin{cases}
 -2,& i=j\\
 \ 1, &i=j+1,\, i=1,\dots,6,~ {\rm or\ if}\ \ i=4,\, j=0\\
 \ 0, &\text{otherwise}.
\end{cases}
\end{equation}
\begin{figure}[hbt]
\begin{center}
\hspace*{5em}\includegraphics[width=0.8\textwidth]{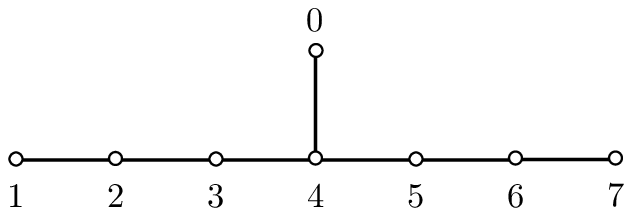}
\caption{Dynkin diagram for the root lattice $\bigoplus_{i=0}^7\mathbb{Z}\alpha_i$.}
\end{center}
\end{figure}

We define the reflections $s_i$, $i=0, \dots, 7$, across the hyperplane orthogonal to the root $\alpha_i$ by
\begin{equation}
 s_i(\lambda)=\lambda+(\lambda|\alpha_i)\alpha_i 
\end{equation}
for all $\lambda\in {\rm Pic}(X)$ and the diagram automorphisms Aut$(A_1^{(1)})=\langle\sigma\rangle$ by
\begin{equation}
 \sigma:(e_1,e_2,e_3,e_4,e_5,e_6,e_7,e_8)
 \mapsto(e_8,e_7,e_6,e_5,e_4,e_3,e_2,e_1).
\end{equation}
We can easily verify that the actions of $\widetilde{W}(E_7^{(1)})=\langle s_0,\dots,s_7,\sigma\rangle$
on Pic$(X)$ satisfy the fundamental relations
of the extended affine Weyl group of type $E_7^{(1)}$:
\begin{subequations}
\begin{align}
 {s_i}^2&=1,\quad i=0,\dots,7,\\
 (s_is_{i+1})^3&=1,\quad i=1,\dots,6,\\
 (s_4s_0)^3&=1,\\
 (s_is_j)^2&=1,\quad\text{otherwise},\\
 \sigma^2&=1,\\
 \sigma s_{(1,2,3,5,6,7)}&=s_{(7,6,5,3,2,1)}\sigma.
\end{align}
\end{subequations}
In Sakai's classification, there are two types of dynamics (additive and multiplicative) for $A_1^{(1)}$-surface.
In this case, surface $X$ is multiplicative type, that is, the surface is referred to as the type $q$-$A_1^{(1)}$.
Namely, the birational actions of $\widetilde{W}(E_7^{(1)})$ on parameters $v_i$ and $u_j$ are given by
\begin{align*}
 &s_0:(u_1,u_2,u_3,u_4,u_5,u_6,u_7,u_8,v_1,v_2)\\
 &\quad\mapsto(v_1u_1,v_1u_2,v_1u_3,v_1u_4,v_2u_5,v_2u_6,v_2u_7,v_2u_8,v_2,v_1),\\
 &s_i:(u_i,u_{i+1})\mapsto (u_{i+1},u_i),\quad i=1,2,3,5,6,7,\\
 &s_4:(u_4,u_5,v_1,v_2)\mapsto(u_5,u_4,{u_4}^{1/2}{u_5}^{-1/2}v_1,{u_4}^{-1/2}{u_5}^{1/2}v_2),\\
 &\sigma:(u_1,u_2,u_3,u_4,u_5,u_6,u_7,u_8)
 \mapsto({u_8}^{-1},{u_7}^{-1},{u_6}^{-1},{u_5}^{-1},{u_4}^{-1},{u_3}^{-1},{u_2}^{-1},{u_1}^{-1}),
\end{align*}
and those on the variables $f$ and $g$ are given by
\begin{align*}
 &s_0:(f,g)\mapsto (g,f),\\
 &s_4(f)
 ={u_4}^{1/2}{u_5}^{1/2}
 \cfrac{(v_1-v_2)fg+(v_1u_4-v_2u_5)f+(-u_4+u_5)g}{(-u_4+u_5)f+(v_1u_4-v_2u_5)g+(v_1-v_2)u_4u_5},\\
 &\sigma:(f,g)\mapsto(f^{-1},g^{-1}).
\end{align*}
Note that 
\begin{equation}
 q=\cfrac{{v_1}^2{u_1}^{1/2}{u_2}^{1/2}{u_3}^{1/2}{u_4}^{1/2}}{{u_5}^{1/2}{u_6}^{1/2}{u_7}^{1/2}{u_8}^{1/2}}
\end{equation} 
is invariant under the action of $\widetilde{W}(E_7^{(1)})$.

Let us consider the translation $T_0={R_0}^2$ where
\begin{equation}
 R_0=s_4 s_0 s_3 s_4 s_2 s_3 s_2 s_3 s_4 s_5 
 s_6 s_0 s_4 s_3 s_5 s_4 s_0 s_7 s_6 s_5 s_4 s_1
 s_2 s_4 s_3 s_0 s_4.
\end{equation}
The translation $T_0$ acts 
on the simple roots as
\begin{equation}
 T_0:\bm{\alpha} \mapsto \bm{\alpha}+(0,0,0,-\delta,0,0,0,2\delta),
\end{equation}
where $\bm{\alpha}=(\alpha_1,\dots,\alpha_7,\alpha_0)$,
on the parameters as
\begin{equation}
 T_0:\left(\begin{matrix}u_1&u_2&u_3&u_4\\u_5&u_6&u_7&u_8\end{matrix}\,;~v_1\right)
 \mapsto \left(\begin{matrix}q^{-1}u_1&q^{-1}u_2&q^{-1}u_3&q^{-1}u_4\\qu_5&qu_6&qu_7&qu_8\end{matrix}\,;~q^2v_1\right),
\end{equation}
and on the variables as
\begin{subequations}\label{eqns:qPA1_T0}
\begin{align}
 &\cfrac{(fT_0(g)-q^2v^2)(fg-v^2)}{(fT_0(g)-1)(fg-1)}
 =\cfrac{(f-b_1 v)(f-b_2 v)(f-b_3 v)(f-b_4 v)}{(f-b_5)(f-b_6)(f-b_7)(f-b_8)},\\
 &\cfrac{(fg-v^2)({T_0}^{-1}(f)g-q^{-2}v^2)}{(fg-1)({T_0}^{-1}(f)g-1)}
 =\cfrac{(g-{b_1}^{-1} v)(g-{b_2}^{-1} v)(g-{b_3}^{-1} v)(g-{b_4}^{-1} v)}{(g-{b_5}^{-1})(g-{b_6}^{-1})(g-{b_7}^{-1})(g-{b_8}^{-1})},
\end{align}
\end{subequations}
where
\begin{equation}
 v=v_1,\quad b_i={v_1}^{1/2}u_i\quad (i=1,\dots,4),\quad b_j=\cfrac{u_j}{{v_1}^{1/2}}\quad (j=5,\dots,8).
\end{equation}
The system \eqref{eqns:qPA1_T0} is usually referred to as a $q$-Painlev\'e equation of $A_1^{(1)}$-surface type\cite{MSY2003:MR1958273,SakaiH2001:MR1882403}.

In \cite{HHNS:2015JPhysA}, the discrete dynamical system corresponding to 
\begin{equation}
 \varphi
 =\sigma s_3s_5s_4s_5s_0s_3s_4s_0s_6s_5s_2s_3s_1s_2s_7s_4
 s_3s_2s_5s_6s_4s_5s_3s_4s_3s_7s_6s_7s_1s_2s_1s_0
\end{equation}
is considered with the following special case of parameters:
\begin{subequations}
\begin{align}
 &{u_1}^{1/2}={\rm e}^{-\gamma_o/2},\quad
 {u_2}^{1/2}=-{\rm i}{\rm e}^{-\gamma_o/2},\quad
 {u_3}^{1/2}={\rm e}^{z/2},\quad
 {u_4}^{1/2}=-{\rm i}{\rm e}^{z/2},\\
 &{u_5}^{1/2}={\rm e}^{-z/2},\quad
 {u_6}^{1/2}=-{\rm i}{\rm e}^{-z/2},\quad
 {u_7}^{1/2}={\rm e}^{\gamma_o/2},\quad
 {u_8}^{1/2}=-{\rm i}{\rm e}^{\gamma_o/2},\\
 &v_1={\rm e}^{-z+\gamma_e},\quad
 v_2={\rm e}^{z-\gamma_e},\quad
 q={\rm e}^{2(\gamma_e-\gamma_o)}.
\end{align}
\end{subequations}
The resulting equation is given by
\begin{subequations}\label{eqns:HHNS}
\begin{align}
 &\bar{f}
 =\cfrac{{\rm sinh}(2z)\,fg-{\rm sinh}(z-\gamma_e)-{\rm sinh}(z+\gamma_e)\,g^2}
 {{\rm sinh}(z-\gamma_e)\,fg^2-{\rm sinh}(2z)\,g+{\rm sinh}(z+\gamma_e)\,f}\,,\\
 &\bar{g}
 =\cfrac{{\rm sinh}(2(z+\gamma_e-\gamma_o))\,g\bar{f}-{\rm sinh}(z+\gamma_e-2\gamma_o)-{\rm sinh}(z+\gamma_e)\,\bar{f}^2}
 {{\rm sinh}(z+\gamma_e-2\gamma_0)\,g\bar{f}^2-{\rm sinh}(2(z+\gamma_e-\gamma_o))\,\bar{f}+{\rm sinh}(z+\gamma_e)\,g}\,,
\end{align}
\end{subequations}
where
\begin{equation}
 \varphi:(z,\gamma_e,\gamma_o,f,g)\mapsto
 (z+2(\gamma_e-\gamma_o),\gamma_e,\gamma_o,\bar{f},\bar{g}).
\end{equation}
System \eqref{eqns:HHNS} is another non-autonomous form of Equation \eqref{eqn:aut_Q4eqn} when $k=1$.
In fact, letting 
\begin{equation}
 \gamma_e=\gamma_o=\gamma,
\end{equation}
and substituting 
\begin{align}
 &f=u_{2n-1},\quad
 g=u_{2n},\quad
 \bar{f}=u_{2n+1},\quad
 \bar{g}=u_{2n+2},\\
 &\cfrac{{\rm sinh}(z-\gamma)}{{\rm sinh}(2z)}
 =-\cfrac{\sn(\alpha-\beta)\sn(\alpha)\sn(\beta)}{\sn(\alpha)-\sn(\beta)}\,,\quad
 \cfrac{{\rm sinh}(z+\gamma)}{{\rm sinh}(2z)}
 =\cfrac{\sn(\alpha-\beta)}{\sn(\alpha)-\sn(\beta)}\,,
\end{align}
in System \eqref{eqns:HHNS}, we obtain Equation \eqref{eqn:aut_Q4eqn} with $k=1$.

\section{Proof of Lemma \ref{lemma:symmetry}}\label{section:proof_symmetry}
In this section, we prove Lemma \ref{lemma:symmetry}.
First, we recall the action of $W(E_8^{(1)})=\langle w_0,w_1,\dots,w_8\rangle$ 
on the parameters $u_i$, $i=1,\dots,8$, and $b$:
\begin{subequations}
\begin{align}
 &w_1:\,b\mapsto -b,\\
 &w_2(u_1)=u_1-\cfrac{3(2b+u_1+u_2)}{4},\quad
 w_2(u_2)=u_2-\cfrac{3(2b+u_1+u_2)}{4},\\
 &w_2(u_3)=u_3+\cfrac{2b+u_1+u_2}{4},\quad
 w_2(u_4)=u_4+\cfrac{2b+u_1+u_2}{4},\\
 &w_2(u_5)=u_5+\cfrac{2b+u_1+u_2}{4},\quad
 w_2(u_6)=u_6+\cfrac{2b+u_1+u_2}{4},\\
 &w_2(u_7)=u_7+\cfrac{2b+u_1+u_2}{4},\quad
 w_2(u_8)=u_8+\cfrac{2b+u_1+u_2}{4},\\
 &w_2(b)=b-\cfrac{2b+u_1+u_2}{4},\\
 &w_i:\,(u_{i-1},u_i)\mapsto(u_i,u_{i-1}), ~i=3,\dots,7\label{eqn:action_w_i},\\
 &w_0:\,(u_7,u_8)\mapsto(u_8,u_7),\\
 &w_8:\,(u_1,u_2)\mapsto(u_2,u_1).
\end{align}
\end{subequations}
We define the translations of $W(E_8^{(1)})$ after \cite{KMNOY2003:MR1984002} by
\begin{subequations}
\begin{align}
 &T_1=s_{134}s_{156}s_{789}s_{156}s_{134}w_1,
 &&T_2=w_2T_1w_2{T_1}^{-1},\\
 &T_3=w_3T_2w_3{T_2}^{-1},
 &&T_4=w_4T_3w_4{T_3}^{-1},\\
 &T_5=w_5T_4w_5{T_4}^{-1},
 &&T_6=w_6T_5w_6{T_5}^{-1},\\
 &T_7=w_7T_6w_7{T_6}^{-1},
 &&T_0=w_0T_7w_0{T_7}^{-1},\\
 &T_8=w_8T_3w_8{T_3}^{-1},
\end{align}
\end{subequations}
where
\begin{subequations}
\begin{align}
 s_{134}=&w_2w_3w_8w_3w_2,\\
 s_{156}=&w_4w_3w_2w_5 w_4w_3w_8w_3w_4 w_5w_2w_3w_4,\\
 s_{789}
 =&w_6w_5w_4w_3w_2w_1w_7w_6w_5w_4
   w_3w_2w_0w_7w_6w_5w_4w_3w_8w_3\notag\\
   &w_4w_5w_6w_7w_0w_2w_3w_4w_5w_6
   w_7w_1w_2w_3w_4w_5w_6.
\end{align}
\end{subequations}
Note that $T_i$, $i=0,\dots,8$, commute with each other and satisfy
\begin{equation}\label{eqn:relationT}
 {T_1}^2{T_2}^4{T_3}^6{T_4}^5{T_5}^4{T_6}^3{T_7}^2T_0{T_8}^3=1.
\end{equation}
The actions of translations on the parameters are given by
\begin{align*}
 T_1:&b\mapsto b+\lambda,\\
 T_2:&\left(\begin{array}{llll}u_1&u_2&u_3&u_4\\u_5&u_6&u_7&u_8\end{array}b\right)
 \mapsto
 \left(\begin{array}{llll}u_1-\frac{3}{2}\lambda&u_2-\frac{3}{2}\lambda&u_3+\frac{1}{2}\lambda&u_4+\frac{1}{2}\lambda\\
 u_5+\frac{1}{2}\lambda&u_6+\frac{1}{2}\lambda&u_7+\frac{1}{2}\lambda&u_8+\frac{1}{2}\lambda\end{array}
 b-\frac{\lambda}{2}\right),\notag\\
 T_3:&(u_2,u_3)\mapsto(u_2+2\lambda,u_3-2\lambda),\quad
 T_4:(u_3,u_4)\mapsto(u_3+2\lambda,u_4-2\lambda),\\
 T_5:&(u_4,u_5)\mapsto(u_4+2\lambda,u_5-2\lambda),\quad
 T_6:(u_5,u_6)\mapsto(u_5+2\lambda,u_6-2\lambda),\\
 T_7:&(u_6,u_7)\mapsto(u_6+2\lambda,u_7-2\lambda),\quad
 T_0:(u_7,u_8)\mapsto(u_7+2\lambda,u_8-2\lambda),\\
 T_8:&(u_1,u_2)\mapsto(u_1+2\lambda,u_2-2\lambda),
\end{align*}
where $\lambda=\frac{1}{2}\sum_{i=1}^8 u_i$.
We note that 
\begin{equation}
 W(E_8^{(1)})=W(E_8)\ltimes \langle T_0,\dots,T_8\rangle,
\end{equation}
where $W(E_8)=\langle w_1,\dots,w_8\rangle$.

We are now in a position to prove Lemma \ref{lemma:symmetry}.
The goal here is to find out the generators of group $\Omega$.
Since $W(E_8^{(1)})$ is an infinite set, it is not possible to check 
whether each element satisfies \eqref{eqn:symmetry_condition1} under the condition \eqref{eqn:subspace}.
Therefore, we first reduce the number of the elements which should be checked.

Let 
\begin{align}
 &\overline{\Omega}=\langle w_0,w_1,w_2,r_1,w_4,r_2,w_6,r_3,w_8,T_1,T_2,T_8{T_3}^2T_4,T_4{T_5}^2T_6\rangle,\\
 &H=\langle w_1,w_2,w_4,w_6,w_8,r_1,r_2\rangle,
\end{align}
where $r_i$ are defined by \eqref{eqn:def_r123}.
It is obvious that
\begin{equation}
 H\subset\overline{\Omega}\subset\Omega,\quad 
 H\subset W(E_8).
\end{equation}
We define the equivalence relations $\eqOmega$ and $\eqH$ as follows.
\begin{description}
\item[(i)]
Let $w,w'\in W(E_8^{(1)})$.
We say that $w$ is equivalent to $w'$ by $\overline{\Omega}$ and
we write $w\eqOmega w'$ if $w'=\rho_1 w\rho_2$ with elements $\rho_1,\rho_2\in\overline{\Omega}$.
\item[(ii)]
Let $w,w'\in W(E_8)$.
We say that $w$ is equivalent to $w'$ by $H$ and
we write $w\eqH w'$ if $w'=\rho_1 w\rho_2$ with elements $\rho_1,\rho_2\in H$.
\end{description}
Then, we obtain the following lemma.
\begin{lemma}
The generators of $\Omega$ lie in $\WH\ltimes \langle T_3,T_4,T_5,T_7\rangle$.
\end{lemma}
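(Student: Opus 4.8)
The plan is to exploit the semidirect product decomposition $W(E_8^{(1)})=W(E_8)\ltimes\langle T_0,\dots,T_8\rangle$ recalled above: every $\rho\in\Omega$ is written uniquely as $\rho=w\tau$ with $w\in W(E_8)$ and $\tau\in\langle T_0,\dots,T_8\rangle$. I would reduce the two factors separately, the translation part modulo the translations contained in $\overline{\Omega}$ and the finite Weyl part modulo $H$, and the content of the lemma is that these two reductions combine into the asserted set $\WH\ltimes\langle T_3,T_4,T_5,T_7\rangle$. Since multiplication by elements of $\overline{\Omega}$ keeps $\rho$ within its $\eqOmega$-class, each reduction step is legitimate by construction.

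First I would reduce the translation part. As translations commute, right multiplication of $\rho$ by a translation in $\overline{\Omega}$ alters only $\tau$, replacing it by its product with that translation; thus $\tau$ may be reduced modulo the sublattice generated by the four translations $T_1,T_2,T_8{T_3}^2T_4,T_4{T_5}^2T_6$ belonging to $\overline{\Omega}$. Writing the translation lattice additively and using the relation \eqref{eqn:relationT} to present it as the free lattice on $T_1,\dots,T_8$, these four generators impose $T_1\equiv T_2\equiv 0$, $T_8\equiv -2T_3-T_4$ and $T_6\equiv -T_4-2T_5$. The quotient is therefore free on $T_3,T_4,T_5,T_7$, so $\rho\eqOmega w\tau'$ with $\tau'\in\langle T_3,T_4,T_5,T_7\rangle$.

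Next I would reduce the Weyl factor. Using $\eqH$, that is left and right multiplication by $H\subset\overline{\Omega}$, the element $w\in W(E_8)$ is brought to a representative of the finite double-coset set $\WH=H\backslash W(E_8)/H$. Left multiplication by $H$ leaves $\tau'$ untouched, so the only delicate point is right multiplication: for $h\in H$ one has $w\tau' h=(wh)(h^{-1}\tau' h)$, which conjugates the translation part by $h$.

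The hard part will be precisely this interaction. I would verify that conjugation by $H$ is compatible with the reduction carried out above, i.e.\ that the $H$-action by conjugation on the translation lattice descends to the quotient by the $\overline{\Omega}$-translations and stabilises the sublattice $\langle T_3,T_4,T_5,T_7\rangle$. Using the standard identity $w T_\mu w^{-1}=T_{w(\mu)}$, which expresses conjugation of a translation in terms of the linear $W(E_8)$-action on the corresponding weight, together with the explicit actions of the generators $w_1,w_2,w_4,w_6,w_8,r_1,r_2$ of $H$ (cf.\ \eqref{eqn:actions_wi_all} and \eqref{eqn:def_r123}), this becomes a finite lattice computation: one checks that each generator of $H$ sends $T_3,T_4,T_5,T_7$ back into $\langle T_3,T_4,T_5,T_7\rangle$ modulo the four $\overline{\Omega}$-translations. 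Granting this stability, the conjugated translation $h^{-1}\tau' h$ can again be reduced by right multiplication by $\overline{\Omega}$-translations into $\langle T_3,T_4,T_5,T_7\rangle$, so the double-coset reduction of $w$ and the translation reduction of $\tau$ are mutually compatible and $\rho$ is $\eqOmega$-equivalent to some $\bar{w}\tau''$ with $\bar{w}\in\WH$ and $\tau''\in\langle T_3,T_4,T_5,T_7\rangle$. Since every generator of $\Omega$ is in particular such a $\rho$, this proves the claim and reduces the determination of $\Omega$ in Lemma \ref{lemma:symmetry} to the finite family of candidates indexed by $\WH\times\langle T_3,T_4,T_5,T_7\rangle$.
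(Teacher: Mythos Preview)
Your approach is correct and close in spirit to the paper's, but the paper orders the two reductions the other way round, and this makes your ``hard part'' disappear. Concretely, the paper first writes $w=\rho_1 w'\rho_2$ with $\rho_1,\rho_2\in H$ and a double-coset representative $w'$, uses normality of the translation lattice to pass $\rho_2$ through $T$ (so $x=\rho_1 w' T'\rho_2$ with $T'=\rho_2 T\rho_2^{-1}$), strips $\rho_1,\rho_2\in\overline{\Omega}$, and only then reduces the single translation $T'$: eliminate $T_0$ via \eqref{eqn:relationT}, then $T_1,T_2$, then $T_8$ and $T_6$ using $T_8T_3^2T_4$ and $T_4T_5^2T_6$. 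Since the conjugation happens \emph{before} any translation reduction, there is no compatibility to verify.

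In your order (translation first, then Weyl), each right multiplication by $h\in H$ conjugates the already-reduced $\tau'$, and you propose checking that $H$-conjugation ``descends to the quotient and stabilises $\langle T_3,T_4,T_5,T_7\rangle$''. But note that this stability is automatic and not really a lattice computation: you have already shown that the quotient of the translation lattice by the $\overline{\Omega}$-translations is exactly $\langle T_3,T_4,T_5,T_7\rangle$, so \emph{every} translation, in particular $h^{-1}\tau' h$, reduces into it by right multiplication. Whether the $H$-action preserves the kernel sublattice (your ``descends'' condition) is irrelevant to the argument. So your proof works, but the extra check you flag as hard is vacuous, and reversing the order of the two reductions, as the paper does, makes this transparent.
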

\begin{proof}
Let $x=wT\in W(E_8^{(1)})$ where $w\in W(E_8)$ and $T\in\langle T_0,\dots,T_8\rangle$.
Assume $w\eqH w'$ where $w'\in W(E_8)$.
Then, there exist $\rho_1,\rho_2\in H$ such that 
\begin{equation}
 w=\rho_1w'\rho_2,
\end{equation}
and $T'={T_0}^{k_0}{T_1}^{k_1}{T_2}^{k_2}{T_3}^{k_3}{T_4}^{k_4}{T_5}^{k_5}{T_6}^{k_6}{T_7}^{k_7}{T_8}^{k_8}$
such that
\begin{equation}
 \rho_2T=T'\rho_2
\end{equation}
since the translation subgroup $\langle T_0,\dots,T_8\rangle$ is a normal subgroup of $W(E_8^{(1)})$.
Then the following relation holds:
\begin{align}
 x&=\rho_1 w'
  {T_1}^{k_1-2k_0}
  {T_2}^{k_2-4k_0}
  {T_3}^{k_3-6k_0}
  {T_4}^{k_4-5k_0}
  {T_5}^{k_5-4k_0}
  {T_6}^{k_6-3k_0}
  {T_7}^{k_7-2k_0}
  {T_8}^{k_8-3k_0}
  \rho_2\notag\\
 &\eqOmega w'
  {T_3}^{k_3-6k_0}
  {T_4}^{k_4-5k_0}
  {T_5}^{k_5-4k_0}
  {T_6}^{k_6-3k_0}
  {T_7}^{k_7-2k_0}
  {T_8}^{k_8-3k_0}\notag\\
 &\eqOmega w'
  {T_3}^{k_3-6k_0-2(k_8-3k_0)}
  {T_4}^{k_4-5k_0-(k_8-3k_0)-(k_6-3k_0)}
  {T_5}^{k_5-4k_0-2(k_6-3k_0)}
  {T_7}^{k_7-2k_0},
\end{align}
because of \eqref{eqn:relationT}
and $\rho_1,\rho_2,T_1,T_2,T_8{T_3}^2T_4,T_4{T_5}^2T_6\in\overline{\Omega}$.
Moreover, it is obvious that
\begin{equation}
 x\in\Omega \Leftrightarrow x'\in\Omega,\quad (x,x'\in W(E_8^{(1)}))
\end{equation}
if $x\eqOmega x'$.
Therefore we have completed the proof.
\end{proof}

We next prove a method of finding out the elements of $\Omega$ from $W(E_8)\ltimes \langle T_3,T_4,T_5,T_7\rangle$ below.
\begin{lemma}\label{lemma:PR_test}
Let $wT\in\Omega$ where $w\in W(E_8)$ and $T\in\langle T_3,T_4,T_5,T_7\rangle$.
Then, there exist ${\bm A}\in{\rm GL}_4(\mathbb{R})$ and ${}^t{\bm K}\in\mathbb{Z}^4$ such that
\begin{equation}\label{eqn:PR_test}
 w.{\bm u}= {\bm A}{\bm u}+2\lambda{\bm K},
\end{equation}
where
\begin{equation}
 {\bm u}=\begin{pmatrix}u_1-u_2\\u_3-u_4\\u_5-u_6\\u_7-u_8\end{pmatrix}.
\end{equation}
\end{lemma}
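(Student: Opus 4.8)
The plan is to turn membership in $\Omega$ into an explicit statement about the $4\times 4$ linear action induced on the difference vector $\bm{u}$, and then to read off the matrix $\bm{A}$ and the integer vector $\bm{K}$ directly from the coefficients of $w$. Throughout I would split the eight-dimensional parameter space into the four differences $d_p:=u_{2p-1}-u_{2p}$ (whose column is $\bm{u}$), the four pair-sums $\sigma_p:=u_{2p-1}+u_{2p}$, and the invariant $\lambda=\tfrac12\sum_i u_i=\tfrac12\sum_p\sigma_p$; the specialization \eqref{eqn:subspace} is precisely the locus $d_1=d_2=d_3=d_4=0$. Two structural facts are needed at the outset. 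First, every generator of $W(E_8)$ acts linearly on $(u_1,\dots,u_8,b)$, since the affine parts of the representation \eqref{eqn:actions_wi_all} are carried entirely by the translations; hence $w(d_p)$ is a linear form $\sum_j a_{pj}u_j+c_p\,b$. Second, $\lambda$ is $W(E_8^{(1)})$-invariant.

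Next I would record the action of the relevant translations on $\bm{u}$. A short computation from the formulas for $T_3,T_4,T_5,T_7$ shows that each shifts $\bm{u}$ by $2\lambda$ times an integer vector, namely by $2\lambda(-1,-1,0,0)^t$, $2\lambda(0,2,0,0)^t$, $2\lambda(0,-1,-1,0)^t$ and $2\lambda(0,0,-1,-1)^t$ respectively, and none of them touches $b$. Consequently any $T\in\langle T_3,T_4,T_5,T_7\rangle$ satisfies $T(d_p)=d_p+2(k_T)_p\lambda$ with $\bm{k}_T=(k_T)_p\in\mathbb{Z}^4$, and, since $T$ acts first in $wT$ and $w$ is linear with $w(\lambda)=\lambda$, one obtains the key identity $(wT)(d_p)=w(d_p)+2(k_T)_p\lambda$.

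The heart of the argument is then to impose $wT\in\Omega$. By definition this means $(wT)(d_p)$ vanishes identically on \eqref{eqn:subspace}. Substituting $u_{2p-1}=u_{2p}=s_p$ turns $(wT)(d_p)=w(d_p)+2(k_T)_p\lambda$ into a linear form in the genuinely free coordinates $s_1,s_2,s_3,s_4,b$: its $b$-coefficient is exactly $c_p$ (there $\lambda=\sum_q s_q$ carries no $b$), and its $s_q$-coefficient equals $(a_{p,2q-1}+a_{p,2q})+2(k_T)_p$. Vanishing therefore forces $c_p=0$ and forces every pair-sum $a_{p,2q-1}+a_{p,2q}$ to equal the single value $-2(k_T)_p$, independent of $q$. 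Re-expanding $w(d_p)$ in the basis $\{d_q,\lambda\}$ via $u_{2q-1}=\tfrac12(\sigma_q+d_q)$, $u_{2q}=\tfrac12(\sigma_q-d_q)$ and $\sum_q\sigma_q=2\lambda$ then gives $w(d_p)=\sum_q A_{pq}d_q+2K_p\lambda$ with $A_{pq}=\tfrac12(a_{p,2q-1}-a_{p,2q})$ and $K_p=-(k_T)_p\in\mathbb{Z}$, which is exactly the asserted identity $w.\bm{u}=\bm{A}\bm{u}+2\lambda\bm{K}$.

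It remains to see $\bm{A}\in\mathrm{GL}_4(\mathbb{R})$. Here I would observe that, with $K_p=-(k_T)_p$, the $\lambda$-terms in $(wT)(d_p)$ cancel, so $wT$ acts on $\bm{u}$ by the \emph{pure} linear map $\bm{A}$ --- as it must, since $\Omega$ preserves the locus $\{\bm{u}=0\}$. As $\Omega$ is a group, $(wT)^{-1}\in\Omega$ also acts on $\bm{u}$ by a matrix of this form, and composing the two gives the identity, whence $\bm{A}$ is invertible. The only step requiring care is the passage to \eqref{eqn:subspace} in the third paragraph: one must verify that $s_1,\dots,s_4,b$ really are free coordinates there, so that vanishing of the restricted linear form is equivalent to the coefficient equations $c_p=0$ and ``all pair-sums equal $-2(k_T)_p$''; granting this, the remainder is routine bookkeeping in the chosen coordinates.
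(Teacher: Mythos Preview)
Your proposal is correct and follows essentially the same approach as the paper. The paper's version is more concise: it observes directly that $x=wT\in\Omega$ forces $x.\bm{u}=\bm{A}\bm{u}$ (since $x$ preserves the subspace $\bm{u}=0$ and acts linearly), then writes $w=xT^{-1}$ and applies $x$ to $T^{-1}.\bm{u}=\bm{u}+2\lambda\bm{K}$ using $x.\lambda=\lambda$; you instead expand $w(d_p)$ in coordinates, impose the $\Omega$-condition on $(wT)(d_p)=w(d_p)+2(k_T)_p\lambda$ to pin down the coefficients, and reassemble --- the same logic carried out explicitly.
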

\begin{proof}
Let $x=wT\in \Omega$ where $w\in W(E_8)$ and $T\in\langle T_3,T_4,T_5,T_7\rangle$.
By definition, the actions of $x$ and $T$ on ${\bm u}$ can be given by
\begin{align}
 &x.{\bm u}={\bm A}{\bm u},\\
 &T^{-1}.{\bm u}={\bm u}+2\lambda {\bm K},
\end{align}
where ${\bm A}\in{\rm GL}_4(\mathbb{R})$ and ${}^t{\bm K}\in\mathbb{Z}^4$.
Therefore, the statement follows from
\begin{equation}
 w.{\bm u}
 =x T^{-1}. {\bm u}
 = {\bm A}{\bm u}+2\lambda{\bm K}.
\end{equation}
\end{proof}

We obtain $\WH=\{ 1,\zeta_1,\dots,\zeta_{93}\}$ by using \textsc{MAGMA}\cite{BCP1997:MR1484478}
with the following commands:
\begin{lstlisting}[basicstyle=\ttfamily\footnotesize]
G < w1, w8, w2, w3, w4,w5, w6, w7 > := CoxeterGroup (GrpFPCox, "E8");
H := sub < G | w1, w2, w4, w6, w8, w3*w4*w8*w3, w5*w4*w6*w5 >;
DoubleCosets (G, H, H);
\end{lstlisting}
We checked the actions of $\{{\zeta_i}^{-1}\}_{i=1,\dots,93}$ exhaustively by using the method given in Lemma \ref{lemma:PR_test}.
As a result, we find that only two of them satisfy \eqref{eqn:PR_test}:
\begin{align}
 &{\zeta_5}^{-1}.{\bm u}
 =\begin{pmatrix}1&0&0&0\\0&1&0&0\\0&0&1&0\\0&0&0&-1\end{pmatrix}{\bm u}
 +2\lambda\begin{pmatrix}0\\0\\0\\-2\end{pmatrix},\\
 &{\zeta_{93}}^{-1}.{\bm u}
 =\begin{pmatrix}0&1&0&0\\0&0&0&-1\\1&0&0&0\\0&0&-1&0\end{pmatrix}{\bm u}
 +2\lambda\begin{pmatrix}0\\-1\\0\\-1\end{pmatrix},
\end{align}
where
\begin{align}
 \zeta_5=&
 w_7 w_6 w_5 w_4 w_3 w_8 w_2 w_1 w_3 w_2
 w_4 w_3 w_8 w_5 w_4 w_3 w_2 w_1 w_6 w_5
 w_4 w_3 w_8 w_2 w_3 w_4\notag\\
 &w_5 w_6 w_7 w_6 w_5 w_4 w_3 w_8 w_2 w_1 
 w_3 w_2 w_4 w_3 w_8 w_5 w_4 w_3 w_2 w_1
 w_6 w_5 w_4 w_3 w_8 w_2\notag\\
 &w_3 w_4 w_5 w_6 w_7,\\
 \zeta_{93}=&
 w_3 w_8 w_2 w_1 w_3 w_2 w_5 w_4 w_3 w_8
 w_6 w_5 w_4 w_3 w_2 w_1 w_7 w_6 w_5 w_4
 w_3 w_8 w_2 w_1 w_3 w_2\notag\\
 &w_4 w_3 w_8 w_5 w_4 w_3 w_2 w_1 w_6 w_5
 w_4 w_3 w_8 w_2 w_3 w_4 w_5 w_7.
\end{align}
Therefore, we obtain the following two elements of $\Omega$:
\begin{equation}
 {\zeta_5}^{-1}T_4{T_5}^2{T_7}^{-2},\quad
 {\zeta_{93}}^{-1}T_4T_5{T_7}^{-1}.
\end{equation}
However, these can be expressed by the elements of $\overline{\Omega}$ as follows
\begin{align}
  &{\zeta_5}^{-1}T_4{T_5}^2{T_7}^{-2}
 =w_0r_2{\zeta_{93}}^{-1}T_4T_5{T_7}^{-1}r_2{\zeta_{93}}^{-1}T_4T_5{T_7}^{-1},\\
 &{\zeta_{93}}^{-1}T_4T_5{T_7}^{-1}
 =w_0w_4w_6w_8 r_1r_2{\varphi_s}^{-1}r_2{\varphi_s}^{-1}
 r_2{\varphi_s}^{-1}r_2{\varphi_s}^{-1}r_1r_2
 r_1\varphi_s r_2 \varphi_s r_2r_1,
\end{align}
where $\varphi_s$ is given by \eqref{eqn:wr_phi}.
Therefore, we have $\Omega=\overline{\Omega}$.
Finally, 
\begin{equation}
 \overline{\Omega}=\langle w_0,w_1,w_2,r_1,w_4,r_2,w_6,r_3,w_8\rangle,
\end{equation}
follows from
\begin{align}
 &T_1=w_2{\varphi_a}^{-1}w_2w_1w_2\varphi_aw_2w_1,\\
 &T_2={\varphi_a}^{-1}w_2\varphi_aw_2,\\
 &T_8{T_3}^2T_4=w_0w_4w_6w_8 w_2{\varphi_a}^{-1}w_2w_1w_2{\varphi_a}^{-1}w_2w_1\varphi_a,\\
 &T_4{T_5}^2T_6=r_2{\varphi_a}^{-1}r_2\varphi_a,
\end{align}
where $\varphi_a$ is given by \eqref{eqn:wr_phi}.
Therefore we have completed the proof of Lemma \ref{lemma:symmetry}.

\def\cprime{$'$}

\end{document}